\documentclass[keeplastbox,conference]{style/IEEEtran}
\IEEEoverridecommandlockouts
\usepackage{amssymb}
\usepackage{amsthm}
\usepackage{amsmath}
\usepackage{color}
\usepackage{enumitem}
\usepackage{arydshln}
\usepackage{subcaption}
\usepackage{graphicx}
\usepackage{comment}
\usepackage{tikz}
\usepackage{pgfplots}
\pgfplotsset{every tick label/.append style={font=\small}}

\usepackage{algorithm}
\usepackage{algorithmic}

\usepackage{flushend}
\usepackage{url}

\usepackage[colorinlistoftodos]{todonotes}
\newcommand\eat[1]{}

\newcommand\del[1]{}
\newcommand\submit[1]{}
\newcommand\techrep[1]{#1}
\newcommand\revision[1]{#1}

\newcommand\new[1]{#1}

\newtheorem{theorem}{Theorem} 
\newtheorem{definition}{Definition}
\newtheorem{example}{Example}
\newtheorem{problem}{Problem}
\newcommand{\stitle}[1]{\vspace{1ex}\noindent{\bf #1}}

\newcommand{\naive}{na\"{i}ve}

\usepackage{xspace}

\newcommand{\patterncombiner}{{\sc pattern-combiner\xspace}}
\newcommand{\patternbreaker}{{\sc pattern-breaker\xspace}}
\newcommand{\deepdiver}{{\sc deepdiver\xspace}}

\newcommand{\greedy}{{\sc greedy}\xspace}
\newcommand{\apriori}{{\sc apriori}\xspace}

\def\BibTeX{{\rm B\kern-.05em{\sc i\kern-.025em b}\kern-.08em
    T\kern-.1667em\lower.7ex\hbox{E}\kern-.125emX}}

\begin{document}
\title{Assessing and Remedying Coverage\\ for a Given Dataset}

\author{\IEEEauthorblockN{Abolfazl Asudeh, Zhongjun Jin, H. V. Jagadish}
\IEEEauthorblockA{University of Michigan}
\textit{$\{$asudeh, markjin, jag$\}$@umich.edu}
}

\maketitle
\begin{abstract}
Data analysis impacts virtually every aspect of our society today.  
Often, this analysis is performed on an existing dataset, possibly collected through a process that the data scientists had limited control over.
The existing data analyzed may not include the complete universe, but it is expected to cover the diversity of items in the universe.
Lack of adequate coverage in the dataset can result in undesirable outcomes such as biased decisions and algorithmic racism, as well as creating vulnerabilities such as opening up room for adversarial attacks.

In this paper, we assess the coverage of a given dataset over multiple categorical attributes.
We first provide efficient techniques for traversing the combinatorial explosion of value combinations to {\em identify} any regions of attribute space not adequately covered by the data. 
Then, we determine the least amount of additional data that must be {\em obtained} to resolve this lack of adequate coverage.
We confirm the value of our proposal through both theoretical analyses and comprehensive experiments on real data.
\end{abstract}

\section{Introduction}
\label{sec:intro}
In the current age of data science, it is commonplace to have a learning algorithm trained based on some dataset.
\new{
This dataset could be collected prospectively, such as through a survey or a scientific experiment. In such a case, a data scientist may be able to specify requirements such as representation and coverage.
However, more often than not, analyses are done with data that has been acquired independently, possibly through a process on which the data scientist has limited, or no, control. This is often called ``found data'' in the data science context.}
It is generally understood that the training dataset must be representative of the distribution from which the actual test/production data will be drawn.  More recently, it has been recognized that it is not enough for the training data to be representative: it must include enough examples from less popular ``categories'', if these categories are to be handled well by the trained system.  Perhaps the best known story underlining the importance of this inclusion is the case of the ``google gorilla'' \cite{google-gorilla}.  An early image recognition algorithm released by Google had not been trained on enough dark-skinned faces.  When presented with an image of a dark African American, the algorithm labeled her as a ``gorilla''.  While Google very quickly patched the software as soon as the story broke, the question is what it could have done beforehand to avoid such a mistake in the first place.

The Google incident is not unique: there have been many other such incidents.  For example, Nikon introduced a camera feature to detect whether humans in the image have their eyes open -- to help avoid the all-too-common situation of the camera-subject blinking when the flash goes off resulting in an image with eyes closed.  Paradoxically for a Japanese company, their training data did not include enough East Asians, so that the software classified many (naturally narrow) Asian eyes as closed even when they were open \cite{closed-eyes}.
Similarly, HP webcams were not able to detect black faces~\cite{hp1} due to inadequate coverage in the training data~\cite{hp2}.

The problem becomes critical when it comes to {\em data-driven algorithmic decision making}.
For example, judges, probation and parole officers are increasingly using algorithms to assess a criminal defendant's likelihood to re-offend~\cite{propublica}.
Consider a tool designed to help the judges in sentencing criminals by predicting how likely an individual is to re-offend. 
Such a tool can provide insightful signals for the judge and have the potential to make society safer.
On the other hand, a wrong signal can have devastating effects on individuals' lives.
So it is important to make sure that the tool is trained on data that includes adequate representation of individuals similar to each criminal that will be scored by it.
In \S~\ref{subsec:exp-validation}, we study a real dataset of criminals used for building such a tool, published by Propublica~\cite{propublica}.
We shall show how 
inadequate representation might result, for example, in predicting every widowed Hispanic female as highly likely to re-offend.

While Google's resolution to the gorilla incident was to {\em ``ban gorillas''}~\cite{google-gorilla-resolution}, a better solution is to ensure that the training data has enough entries in each category.
Referring to the issue as ``disparate predictive accuracy'', \cite{chen2018my} also highlights that the problem often is due to the insufficient or skewed sample sizes.
If the only category of interest were race, as in (most of) the examples above, there are only a handful of categories and this problem is easy.
However, in general, objects can have tens of attributes of interest, all of which could potentially be used to categorize the objects.  
For example, survey scientists use multiple demographical variables to characterize respondents, including race, sex, age, economic status, and geographic location.
Whatever be the mode of data collection for the analysis task at hand, we must ensure that there are enough entries in the dataset for each object category.  
Drawing inspiration from the literature on diversity \cite{diversity-jag}, we refer to this concept as {\em coverage}.

Note that the mentioned examples, including the Google incident, are surely not sampling cases where the data scientists poorly chose the samples from a large database. Rather, they somehow collected, or acquired, a dataset, and then failed to realize the lack of coverage for dark-skinned faces.

Lack of coverage in a dataset also opens up the room for adversarial attacks~\cite{biggio2013evasion}. The goal in an adversarial attack is to generate examples that are misclassified by a trained model.
Poorly covered regions in the training data provide the adversary with opportunities to create such examples.

Our goal in this paper is two-fold.
First, we would like to help the dataset users \new{to be able to assess the coverage, as a characterization, of a given dataset, in order to understand such vulnerabilities.}
For example, we propose to use information about lack of coverage as a widget in the nutritional label~\cite{yang2018nutritional} of a dataset.
\new{
Once the lack of coverage has been identified, next we would like to help data owners improve coverage by identifying the smallest number of additional data points needed to hit all the ``large uncovered spaces''.
}

Given multiple attributes, each with multiple possible values, we have a combinatorial number of possible {\em patterns}, as we call  combinations of values for some or all attributes.  Depending on the size and skew in the dataset, the coverage of the patterns will vary.  Given a dataset, our first problem is to efficiently identify patterns that do not have sufficient coverage (the learned model may perform poorly in portions of the attribute space corresponding to these patterns of attribute values).  It is straightforward to do this using space and time proportional to the total number of possible patterns.  Often, the number of patterns with insufficient coverage may be far fewer.  In this paper, we develop techniques, inspired from set enumeration~\cite{setenum} and association rule mining ({\em apriori})~\cite{apriori}, to make this determination efficient.
\new{We shall further discuss this and the related work in \S~\ref{sec:related}.}

\eat{
The uncovered patterns in a dataset show its vulnerabilities.
As the dataset users, we seek to avoid using the datasets with vulnerabilities.
On the other hand, 
in a dataset with many attributes (each with multiple values),
even after limiting the consideration to a subset of {\em ``attributes of interest''},
the space of possible combinations may become too large
that it is unlikely to have sufficient coverage for each, even in a dataset with large number of diverse items.

Since there is a combinatorial number of patterns,
many of these patterns may not be covered, even with very large datasets. 
In general, uncovered patterns with many attributes specify small uncovered regions in the data and are less harmful than uncovered patterns with only a few attributes.
For example, it may be less of a problem if we have insufficient coverage of black men non-smokers over age of 60 who live in Texas than if we have insufficient coverage of black men or men over 60.   
Therefore, we may seek to make sure that we have adequate coverage for at least any pattern of $\ell$ attributes, where we call $\ell$ the {\em maximum coverage level}.
Checking this is easy, given the set of uncovered patterns, as we can use them to determine the maximum coverage level.  In fact, determining the latter does not even require that we identify all patterns with insufficient coverage: we can limit the exploration to the level $\ell$. 
}

A more interesting question for the dataset owners is what they can do about lack of coverage.
Given a list of patterns with insufficient coverage, they may try to fix these, for example by acquiring additional data.  In the ideal case, they will be able to acquire enough additional data to get sufficient coverage for all patterns.  However, acquiring data has costs, for data collection, integration, transformation, storage, etc.  Given the combinatorial number of patterns, it may just not be feasible to cover all of them in practice.  Therefore, we may seek to make sure that we have adequate coverage for at least any pattern of $\ell$ attributes, where we call $\ell$ the {\em maximum covered level}.
Alternatively, we could identify important pattern combinations by means of a {\em value count}, indicating how many combinations of attribute values match that pattern.
Hence, our goal becomes to determine the patterns for the minimum number of items we must add to the dataset to reach a desired maximum covered level or to cover all patterns with at least a specified minimum value count.  Since a single item could contribute to the coverage of multiple patterns, we shall show that this problem translates to a hitting set~\cite{vazirani2013approximation} instance. Given the combinatorial number of possible value combinations, the direct implementation of hitting set techniques can be very expensive.
We present an approximate solution technique that can cheaply provide good results.

We note that not all combinations of attribute values are of interest.  Some may be extremely unlikely, or even infeasible.  For example, we may find few people with attribute {\sf age} as ``teen'' and attribute {\sf education} as ``graduate degree''.
A human expert, with sufficient domain knowledge, is required to be in the loop for (i) identifying the attributes of interest, over which coverage is studied, (ii) setting up a {\em validation oracle} that identifies the value combinations that are not realistic, and (iii) identifying the uncovered patterns and the granularity of patterns that should get resolved during the coverage enhancement.

\techrep{
\noindent{\bf Summary of contributions.} 
In summary, our contributions in this paper are as follows:
\begin{itemize}[leftmargin=*]
\itemsep0em 
\item We formalize the novel notion of {\em maximal uncovered patterns (MUP)}, to show the lack of coverage with regard to multiple attributes. We define (i) the MUP Identification problem, as well as (ii) Coverage Enhancement problem for resolving the lack of coverage. We prove that no polynomial-time algorithm can exists for (i) and that (ii) is NP-hard.
\item
Introducing the pattern graph for modeling the space of possible patterns, we provide three algorithms \patternbreaker, \patterncombiner, and \deepdiver\ for efficiently discovering the MUPS.
\item For dataset owners, we formulate the problem of additional data collection, connect the problem to hitting set, and propose an efficient implementation of the greedy approximation algorithm for the problem.
\item We use empirical evaluation on real datasets to validate our proposal, and to demonstrate the efficiency of the proposed techniques. Besides the performance evaluations, we investigate the lack of coverage in a dataset of criminals' records and discuss how a tool built using it may generate wrong signals for sentencing criminals. \new{We show that a classifier with an acceptable performance on a random test set, may have a bad performance over the minority groups. We also show that remedying the lack of coverage improves the performance of the model for the minorities.}
\end{itemize}
}

\vspace{-1mm}
\section{Preliminaries} \label{sec:pre}

We consider a dataset $\mathcal{D}$ with $d$ low-dimensional categorical attributes, $\mathcal{A} = \{A_1$, $A_2$, ..., $A_d\}$.
Where attributes are continuous valued or of high cardinalities, 
we consider using techniques such as
(a) bucketization: putting similar values into the same bucket, or (b) considering the hierarchy of attributes in the data cube for reducing the cardinalities.
Each tuple $t \in \mathcal{D}$ is a vector with the value of $A_i$ being $t[i]$ for all $i=1,...d$.
In addition, the dataset also contains the ``label attributes'' $Y = \{ y_1,\cdots,y_{d'}\}$ that contain the target values.
The label attributes are not considered for the coverage problem.
In practice, a user may be interested in studying the coverage over a subset of ``attributes of interest''. In such cases, the problem is limited to those attributes.
For instance, in a dataset of criminals, attributes such as {\tt \small sex}, {\tt \small race}, and {\tt \small age} can be attributes of interest while the label attribute shows whether or not the criminal has re-offended.
In the rest of the paper,
we assume $A_1$ to $A_d$ are the attributes of interest and simply name them as the set of attributes.
The cardinality of an attribute $A_i$ is $c_i$.
Hence, the total number of value combinations is $\Pi_{k=1}^d c_k$.
For a subset of attributes $\mathcal{A}_i\subseteq \mathcal{A}$, we use the notation $c_{\mathcal{A}_i} = \Pi_{\forall A_j\in\mathcal{A}_i} c_j$ to show the number of value combinations for $\mathcal{A}_i$.

\begin{definition}[Pattern]
A pattern $P$ is a vector of size $d$, in which $P[i]$ is either $X$ (meaning that its value is unspecified) or is a value of attribute $A_i$. We name the elements with value $X$ as non-deterministic and the others as deterministic.
\end{definition}

\noindent An item $t$  {\em matches} a pattern $P$ (written as $M(t,P)= \top$), if for all $i$ for which $P[i]$ is deterministic, $t[i]$ is equal to $P[i]$. Formally:
\begin{align}
M(t,P)=\begin{cases}
\top, & \forall i\in [1,d]:P[i]=X \mbox{ or } P[i]=t[i] \\
\bot, & \mbox{otherwise}
\end{cases}
\end{align}

For example, consider the pattern $P$ = $X1X0$ on four binary attributes $A_1$ to $A_4$.
It describes the value combinations that have the value 1 on $A_2$ and 0 on $A_4$.
Hence, for example, $t_1=[1,1,0,0]$ and $t_2=[0,1,1,0]$ match $P$, as their values on all deterministic elements of $P$ (i.e., $A_2$ and $A_4$) match the ones of $P$.
On the other hand, $t_3=[1,0,1,0]$ does not match the pattern $P$. That is because $P[2]=1$ while $t_3[2]=0$.

Using the patterns to describe the space of value combinations, we now define the coverage notion as follows:
\begin{definition}[Coverage]\label{def:coverage}
Given a dataset $\mathcal{D}$ over $d$ attributes with cardinalities $c=\{c_1\cdots c_d\}$, and a Pattern $P$ based on $c$ and $d$, the coverage of $P$ is the number of items in $\mathcal{D}$ that match $P$. Formally:
$cov(P,\mathcal{D}) = | \{ t\in\mathcal{D}~|~ M(t,P)=\top \}|$.
\end{definition}
\noindent When $\mathcal{D}$ is known, we can simplify $cov(P,\mathcal{D})$ with $cov(P)$.

We would like a high enough coverage for each pattern, to make sure it is adequately represented.  
How high is enough is expected as an input to our problem, and is expected to be determined through statistical analyses.
There is a long tradition of computing the ``power'' of an experiment design, to determine the subject pool size (corresponding to coverage) required to obtain statistically meaningful results.
\revision{
Borrowing the concept from statistics and central limit theorem, the rule of thumb suggests the number of representatives to be around 30. For example, Sudman~\cite{sudman1976applied} suggests that for each ``minor subgroup'' a minimum of 20 to 50 samples is necessary.
This is what we also observed in our experiments (\S~\ref{subsec:exp-validation}).
}
Using such, or other, techniques, we will assume that a 
{\em Coverage threshold}, $\tau$, has been established for each pattern.

\begin{definition}[Covered/Uncovered Pattern]
A pattern $P$ is said to be {\em covered} in a dataset $\mathcal{D}$ if its coverage is greater than or equal to the specified coverage threshold:
$cov(P,\mathcal{D}) \geq \tau$.
Otherwise, the pattern $P$ is said to be {\em uncovered}.
\end{definition}

Each pattern describes a region in the space of value combinations, constrained by its deterministic elements.
We define the level of each pattern $P$, shown as $\ell(P)$, as the number of deterministic elements in it.
Patterns with fewer deterministic elements (smaller level) are more general.
For example, consider two patterns $P_1=$1XXX and $P_2=$10X1 on four binary attributes $A_1$ to $A_4$.
$\ell(P_1)=1$ and $\ell(P_2)=3$.
While only the value combinations 1001 and 1011 match $P_2$, any value combination with value 1 on $A_1$ matches $P_1$.

The set of value combinations that match a pattern $P$ may be a subset of the ones that match a more general pattern $P'$. We say that $P$ is {\em dominated by}  $P'$ (or $P'$ dominates $P$).
For example, the pattern $P_2=$10X1 is dominated by the pattern $P_1=$1XXX.

\begin{definition}[Parent/Child Pattern]
A pattern $P_1$ is a parent of a pattern $P_2$ if $P_1$ can be obtained by replacing one of the deterministic elements in $P_2$ (say $P_2[i]$) with $X$. We can equivalently say that $P_2$ is a child of pattern $P_1$.
\end{definition}

In general, patterns can each have multiple parents and multiple children.  A pattern with all elements being non-deterministic has no parent and a pattern with all elements being deterministic has no child. 

If a pattern is uncovered, all of its children, and their children, recursively, must also be uncovered. When identifying uncovered patterns, it is redundant to list all these dominated uncovered patterns: doing so just makes the output much larger, and much harder for a human to digest and address.
Therefore, 
our goal is to identify the uncovered patterns that are not dominated by more general uncovered ones.

\begin{definition}[Maximal Uncovered Pattern (MUP)]\label{def:mup}
Given a threshold $\tau$, a pattern $P$ is maximal uncovered, if $cov(P)< \tau$, while for any pattern $P'$ parent of $P$, $cov(P')\geq \tau$.
\end{definition}

\noindent With these definitions, we formally state our first problem as:

\begin{problem}[MUP Identification Problem]\label{problem1}
Given a dataset $\mathcal{D}$ defined over $d$ attributes with cardinalities $c$, as well as the coverage threshold $\tau$, find all maximal uncovered patterns $\mathcal{M}$.
\end{problem}

While there usually are far fewer MUPs than uncovered patterns, the worst case remains bad, as we show next.

\begin{theorem}\label{th:1}
No Polynomial time algorithm can guarantee the enumeration of the set of maximal uncovered patterns.
\end{theorem}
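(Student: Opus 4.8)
The plan is to prove the statement by an output-size argument. Since any algorithm that enumerates $\mathcal{M}$ must, at the very least, write down each of its elements, the running time of any correct enumeration algorithm is bounded below by $|\mathcal{M}|$. It therefore suffices to exhibit a family of instances whose input size is polynomial in the number of attributes $d$, yet for which $|\mathcal{M}|$ is exponential in $d$. No algorithm running in time polynomial in the input could then enumerate $\mathcal{M}$, because it cannot even produce the output.

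Concretely, I would build the following gadget. Let there be $d=2k$ binary attributes, set the threshold $\tau=1$ (so a pattern is uncovered exactly when no tuple matches it), and group the attributes into the $k$ pairs $\{2i-1,2i\}$. Take the dataset $\mathcal{D}=\{t_1,\dots,t_k\}$ of only $k$ tuples, where $t_i$ has value $1$ in positions $2i-1$ and $2i$ and value $0$ in all other positions; thus the input has size $O(d^2)$. For each selection that fixes, in every pair $i$, exactly one of the two positions to the value $0$ (leaving everything else non-deterministic), we obtain a level-$k$ pattern; there are $2^k$ such patterns and they are pairwise distinct.

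I would then verify that each such pattern $P$ is a MUP. For uncoveredness: $P$ fixes, in each pair $i$, a position whose value in $t_i$ is $1$, so $P$ disagrees with every $t_i$ on a deterministic element; hence no tuple matches $P$ and $cov(P)=0<\tau$. For maximality, consider any parent $P'$ obtained by relaxing the fixed position of some pair $j$ back to $X$: now every deterministic element of $P'$ lies in a pair other than $j$ and is set to $0$, while $t_j$ is $0$ on every position outside pair $j$; hence $t_j$ matches $P'$ and $cov(P')\ge 1=\tau$. As this holds for each of the $k$ parents, $P$ is maximal uncovered. Consequently $|\mathcal{M}|\ge 2^{k}=2^{d/2}$, which together with the $O(d^2)$ input size establishes the theorem.

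The genuinely delicate step is the construction itself, namely arranging for exponentially many MUPs while keeping the dataset, and hence the input, of polynomial size. This is subtle because the obvious candidates fail: a single tuple, or the $d$ ``leave-one-out'' tuples, each yield only $O(d)$ or $O(d^2)$ maximal uncovered patterns, since the corresponding empty sub-cubes are few. The paired-indicator gadget above is what forces the combinatorial blow-up, by creating $k$ independent binary choices (one per pair) that must be resolved simultaneously. By contrast, once the gadget is fixed, checking uncoveredness and, in particular, that \emph{every} parent is covered is routine.
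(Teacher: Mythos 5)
Your proof is correct and follows the same overall strategy as the paper's: both establish a lower bound by constructing a polynomial-size instance with exponentially many MUPs, so that no polynomial-time algorithm can even write down the output. The gadgets, however, are genuinely different. The paper takes $n$ tuples over $d=n$ binary attributes forming an identity matrix and sets $\tau = n/2+1$; its exponential family consists of the $\binom{n}{n/2}$ patterns with exactly $n/2$ deterministic elements, all equal to $0$ (together with the $n$ singleton patterns containing a single $1$). Your paired-indicator gadget instead uses only $k=d/2$ tuples and the minimal threshold $\tau=1$, and yields exactly $2^{d/2}$ MUPs via $k$ independent binary choices; your verification of uncoveredness and of the coverage of every parent is sound. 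Your version buys a few things: it exhibits the blow-up already at the smallest sensible threshold, whereas the paper's construction needs $\tau$ growing linearly with the data size; the count $2^k$ is exact and transparent, requiring no estimate of a central binomial coefficient; and it quietly sidesteps a small slip in the paper, whose closing inequality $n + \binom{n}{n/2} > 2^n$ is false as stated (the left side is $\Theta\bigl(2^n/\sqrt{n}\bigr)$), though this is immaterial since any exponential lower bound suffices. The paper's construction, for its part, additionally pins down the structure of the MUPs of its instance more fully (ruling out any MUP containing a deterministic $1$ beyond the singletons), which your argument neither needs nor attempts --- correctly so, since a lower bound on $|\mathcal{M}|$ is all the theorem requires.
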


The proof is by construction of an example with an exponential number of MUPs.  \submit{Details in the technical report~\cite{techreport}.}
\techrep{
\begin{proof}
We prove the theorem by construction.
Consider a dataset $\mathcal{D}$ (shown in the following) with $n$ items and $d=n$ binary attributes in which only the values of the elements on the diagonal are one and the rest are zero. That is, $\forall i\in [1,n]:~t_i[i]=1$ and $\forall j\neq i:~t_i[j]=0$.
	\begin{center}
    \begin{small}
	\begin{tabular}{|c|c|c|c|c|}
		\hline
		&$A_1$&$A_2$&$\cdots$&$A_n$ \\ \hline
        $t_1$&1&0&$\cdots$&0 \\ \hline
        $t_2$&0&1&$\cdots$&0 \\ \hline
        $\vdots$&\vdots&\vdots&$\ddots$&\vdots \\ \hline
        $t_n$&0&0&$\cdots$&1 \\ \hline
	\end{tabular}
    \end{small}
	\end{center}
Let the threshold $\tau$ be $\frac{n}{2}+1$.
  
First, any pattern with $(n-1)$ non-deterministic element and one deterministic element with value $1$ is a MUP.
That is because $t_i$ is the only tuple matching such a pattern where its deterministic value ($1$) appears at $i$-th element.
Since these patterns have only one non-deterministic element, their parent is the pattern $XX\cdots X$ in which there is no deterministic element. All the items in the dataset match this pattern and, hence, its coverage is $n$.
As a result, all such patterns with only one deterministic element $1$ are MUPs. The number of such patterns in $n$. Also, since any pattern with more than one deterministic element where one of them is $1$ is dominated by these MUPs, there cannot exist such a MUP. It means all the deterministic values of the remaining MUPs should be $0$.

Next, consider a pattern $P$ with $m\leq n$ deterministic elements with values $0$. Let $I$ be the indices of the deterministic elements. For all values $i\in I$, $t_i$ does not match $P$; simply because $t_i[i]=1$ while $P[i]$ is $0$. Note that all other tuples in $\mathcal{D}$, $t_i$ for $i \notin I$, match $P$. Thus, the coverage of such a pattern is $n-m$.
Now, let $m$ be $\frac{n}{2}$. The coverage of any such pattern is $\frac{n}{2}<\tau$. Similarly, the coverage of any of its parents of such nodes is $\frac{n}{2}+1=\tau$ (because it is a pattern with m - 1 deterministic elements with value 0).
As a result, every pattern $P$ with $\frac{n}{2}$ deterministic elements, all being $0$, is a MUP.
The number of such patterns are ${n \choose n/2}$.
Therefore, the total number of MUPs in $\mathcal{D}$ is:
$$|\mathcal{M}| = n + {n \choose n/2} > 2^n$$
As a result, any algorithm enumerating over them is exponential. 
\end{proof}
}

Not all MUPs are problematic.  For example, if some combination of attribute values is known to be infeasible, the corresponding pattern will necessarily be uncovered.  A domain expert can examine a list of MUPs and identify the ones that can safely be ignored.  The remaining are considered {\em material}.

In many situations, large uncovered regions in the dataset are more harmful than narrow uncovered regions.
Following this observation, for a dataset $\mathcal{D}$, we define the {\em maximum covered level} as the maximum level up until which there is enough coverage in the dataset. Formally:
\begin{definition}[Maximum Covered Level]
Let $\mathcal{M}$ be the material MUPs for a dataset $\mathcal{D}$.
Then, the maximum covered level of $\mathcal{D}$ is the maximum level $\lambda$ such that $\forall P\in \mathcal{M}, \ell(P)>\lambda$.
\end{definition}

In light of the above, we would like to have as large a maximum covered level as possible for a dataset.  

\begin{problem}[Coverage Enhancement Problem]\label{problem2}
Given a dataset $\mathcal{D}$, its set of material MUPs $\mathcal{M}_\mathcal{D}$, and a positive integer number $\lambda$, 
determine the minimum set of additional tuples to collect such that, after the data collection, the maximum covered level of $\mathcal{D}$ is at least $\lambda$.
\end{problem}

We can consider variants of the coverage enhancement problem where we seek to attain some other coverage property rather than satisfy a maximum coverage level.
For example, instead of the level of a pattern $P$, one could consider the number of value combination matching it.  
\begin{definition}[Value Count]\label{def:vc}
Let $\mathcal{A}_P$ be the set of corresponding attributes for non-deterministic elements of  a pattern $P$. 
The value count of $P$ is the number of value combinations matching $P$.  That is, $c_{\mathcal{A}_P} = \Pi_{\forall A_j\in\mathcal{A}_P} c_j$.
\end{definition}
For example, consider the pattern $P=X1X0$ over binary attributes $\mathcal{A}=\{A_1,\cdots,A_4\}$. $\mathcal{A}_P=\{A_1,A_3\}$. Hence, the number of value combinations matching $P$ is $c_{\mathcal{A}_P} = 2\times2 = 4$.
The coverage enhancement problem can be modified to require that every pattern $P$ in $\mathcal{D}$ be covered if the value count of $P$ is at least $v$.
As further explained in \S~\ref{sec:4}, the proposed solution can easily be extended for such alternative measures.

Next, in Theorem~\ref{th:2}, we study the complexity of the coverage enhancement problem.

\begin{theorem}\label{th:2}
The Coverage Enhancement Problem is NP-hard.
\end{theorem}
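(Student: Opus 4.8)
The plan is to prove NP-hardness by a reduction from \textsc{Minimum Set Cover} (equivalently \textsc{Hitting Set}), which is well known to be NP-hard~\cite{vazirani2013approximation}. The first step is to restate the objective of Problem~\ref{problem2} in a more convenient form. Achieving a maximum covered level of at least $\lambda$ is equivalent to making \emph{every} pattern of level at most $\lambda$ covered: if some pattern of level $\le \lambda$ remained uncovered, then walking up the parent relation from it (each step to an uncovered parent) reaches a maximal uncovered pattern, per Definition~\ref{def:mup}, of level $\le\lambda$, contradicting the covered-level guarantee; conversely, if all such patterns are covered, no MUP can have level $\le\lambda$. Since inserting tuples only increases coverage, the task is monotone: I seek the fewest additional tuples whose insertion raises every currently uncovered pattern of level $\le\lambda$ to coverage $\tau$.

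Given a set-cover instance with universe $U=\{e_1,\dots,e_m\}$ and sets $S_1,\dots,S_n$, I would build an instance of Problem~\ref{problem2} in which the material MUPs of level $\le\lambda$ are in bijection with the elements of $U$ while the ``useful'' tuples are in bijection with the sets. Concretely, I would use one attribute per set, so that a candidate tuple $\sigma_i$ encoding $S_i$ carries a distinguished value on coordinate $i$, and design for each element $e_j$ a pattern $P_j$ whose deterministic elements are chosen so that $\sigma_i$ matches $P_j$ \emph{iff} $e_j\in S_i$. Dummy attributes then pad every $P_j$ up to the \emph{same} level $\lambda$, so that each $P_j$ has no child of level $\le\lambda$. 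Finally I would populate the seed dataset $\mathcal{D}$ with polynomially many tuples and fix the threshold $\tau$ so that each $P_j$ has coverage exactly $\tau-1$ while every other pattern of level $\le\lambda$ already has coverage $\ge\tau$; the validation oracle is set to deem realistic only the $n$ tuples $\sigma_1,\dots,\sigma_n$. Thus the $P_j$ are precisely the material MUPs of level $\le\lambda$, each one tuple short of being covered.

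With this construction the equivalence is immediate in both directions. Adding $\sigma_i$ raises, by one, the coverage of exactly those $P_j$ with $e_j\in S_i$, turning them covered; hence a collection of added tuples lifts the maximum covered level to $\lambda$ iff the corresponding sets cover $U$. Therefore the minimum number of tuples to collect equals the size of a minimum set cover, and a polynomial-time algorithm for Problem~\ref{problem2} would solve \textsc{Minimum Set Cover}. As the construction is clearly polynomial in $m$ and $n$, this establishes NP-hardness.

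I expect the delicate part to be the gadget engineering rather than the final equivalence. Two points require care. First, the targets must be genuine MUPs at level \emph{exactly} $\lambda$ with everything else pre-covered: since any tuple matching a child of $P_j$ also matches $P_j$, covering a child of level $\le\lambda$ would inadvertently cover $P_j$, so the level-equalizing padding and a conflict-free way of boosting all non-target patterns without over-covering any $P_j$ must be verified to be simultaneously realizable in polynomial size. Second, and most importantly, one must preclude a ``super-tuple'' that matches many targets at once---otherwise a single cleverly chosen value combination that happens to agree with all the $P_j$ on their deterministic coordinates could cover every target and make the optimum trivially small. This is exactly what restricting the realistic tuples to $\{\sigma_1,\dots,\sigma_n\}$ through the validation oracle prevents, and getting this restriction (or an equivalent distinguishing gadget that breaks the uniformity without destroying the set-level sharing) right is the crux of the reduction.
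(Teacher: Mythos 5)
Your overall template is the same as the paper's: the paper reduces from \emph{vertex cover} (a special case of your \textsc{Set Cover}), building one binary attribute per edge and one tuple per vertex (its edge-incidence vector), plus three all-zero padding tuples, with $\tau=3$ and $\lambda=1$. There the uncovered level-$1$ patterns are exactly the per-edge patterns $P_j$ (a single deterministic $1$ in coordinate $j$), each at coverage $\tau-1=2$, while every $0$-valued level-$1$ pattern is already covered by the all-zero tuples. That one choice ($\lambda=1$ plus cheap padding) accomplishes in a stroke the ``gadget engineering'' you defer: no dummy attributes, no level equalization, no separate conflict-free boosting of non-target patterns, because at level $1$ there are only $2|E|$ patterns to control. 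Your opening reformulation (maximum covered level $\geq\lambda$ iff every pattern of level $\leq\lambda$ is covered) is correct and agrees with the paper's own discussion in Appendix~\ref{ap:ptc}.

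The genuine gap is twofold. First, your construction is never actually exhibited; the requirement that every non-target pattern of level $\leq\lambda$ be pre-covered while each $P_j$ sits at exactly $\tau-1$ is left as a promise, and you yourself flag that its realizability is unverified. Second, and more substantively, the one mechanism that makes your equivalence go through---restricting collectible tuples to $\sigma_1,\dots,\sigma_n$ via the validation oracle---is not part of Problem~\ref{problem2}, whose input is only $(\mathcal{D},\mathcal{M}_\mathcal{D},\lambda)$ and which permits \emph{any} value combination to be collected (the oracle appears in \S~\ref{sec:4} as part of the practical pipeline, not in the formal problem being proved hard). So, as written, you establish NP-hardness of an oracle-constrained variant rather than of the stated problem; without the oracle, your own ``super-tuple'' objection defeats your instance, since in any encoding where each $\sigma_i$ hits its element-patterns by agreeing with their deterministic coordinates, a single cleverly chosen combination typically agrees with all targets at once. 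You are right that this is the crux---indeed it is a soft spot in the paper's own argument too: in its reduction the all-ones tuple matches every $P_j$ and covers all of them simultaneously, so a minimum solution need not correspond to any vertex set, yet the proof simply identifies solution tuples with ``their equivalent vertices.'' Closing the gap within Problem~\ref{problem2} as stated would require a combinatorial device inside the instance itself that lets each intended tuple hit its prescribed targets while provably blocking universal tuples (for instance, targets demanding conflicting deterministic values on shared attributes, which a one-attribute-per-set encoding cannot express); neither your sketch nor the appeal to the oracle supplies this.
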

\submit{
We prove the theorem using an interesting polynomial-time reduction from the {\em vertex cover} (VC) problem~\cite{vazirani2013approximation}.
Please find the details in the technical report~\cite{techreport}.
}
\techrep{
\begin{proof}
The decision version of the Coverage Enhancement (CE) Problem
is as follows:
given a dataset $\mathcal{D}$, its set of material MUPs $\mathcal{M}_\mathcal{D}$, a positive integer number $\lambda$, and a decision value $m$, is there a set of $m$ additional tuples to collect such that after the data collection the maximum coverage level of $\mathcal{D}$ is at least $\lambda$.
We use the polynomial-time reduction from the {\em vertex cover} (VC) problem~\cite{vazirani2013approximation}.
The decision version VC, given an unweighted undirected graph $G(V,E)$ and a decision variable $m$, decides if there is a subset of $m$ vertices such that each edge is incident to at least one vertex of the set.

\begin{figure}[t!]
\centering
\subcaptionbox{\footnotesize\label{fig:proofG} A sample graph for the vertex cover problem.}{\includegraphics[width=0.16 \textwidth]{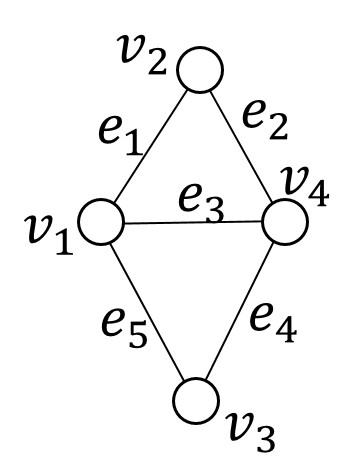}}
\hfill 
\subcaptionbox{\footnotesize\label{fig:proofD} The constructed dataset (and its MUPs) for graph of Figure~\ref{fig:proofG}.}
{	
\begin{scriptsize}
\begin{tabular}{|l||c|c|c|c|c|}
	\hline
	      &$A_1$&$A_2$&$A_3$&$A_4$&$A_5$\\ \hline
    $t_1$ &  1  &  0  &  1  &  0  &  1\\ \hline
    $t_2$ &  1  &  1  &  0  &  0  &  0\\ \hline
    $t_3$ &  0  &  0  &  0  &  1  &  1\\ \hline
    $t_4$ &  0  &  1  &  1  &  1  &  0\\ \hline
    $t_5$ &  0  &  0  &  0  &  0  &  0\\ \hline
    $t_6$ &  0  &  0  &  0  &  0  &  0\\ \hline
    $t_7$ &  0  &  0  &  0  &  0  &  0\\ \hline \hline
    $P_1$ &  1  &  X  &  X  &  X  &  X\\ \hline
    $P_2$ &  X  &  1  &  X  &  X  &  X\\ \hline
    $P_3$ &  X  &  X  &  1  &  X  &  X\\ \hline
    $P_4$ &  X  &  X  &  X  &  1  &  X\\ \hline
    $P_5$ &  X  &  X  &  X  &  X  &  1\\ \hline
	\end{tabular}
\end{scriptsize}
	}
\vspace{-3mm}\caption{\footnotesize 
\label{fig:proof}
Illustration of the reduction from vertex cover to coverage enhancement problem.
}
\vspace{-6mm}
\end{figure}

The reduction is as follows:
Given a graph $G(V,E)$ for VC, construct the dataset $\mathcal{D}$ with $n=|V|+3$ items and $d=|E|$ attributes
(Figure~\ref{fig:proofD} shows the reduction for the sample graph of Figure~\ref{fig:proofG}).
For every edge $e_j\in E$ add the attributes $A_j$ to $\mathcal{D}$.
For every vertex $v_i\in V$, add the item $t_i$ to the dataset.
For every edge $e_j$ that is connected to $v_i$, set $t_i[j]$ to $1$; set the rest of attribute values of $t_i$ as $0$.
Add three items $t_{|v|+1}$, $t_{|v|+2}$, and $t_{|v|+3}$ all with attribute values being $0$.
Set the coverage threshold to $\tau=3$ and the maximum coverage level to $\lambda=1$.
In this dataset, any pattern with only one deterministic element with value $1$ is a MUP. That is because, (i) for any such pattern $P_j$ having $1$ in its $j$-th element, the corresponding items with the two vertices in $G$ that are connected to $e_j$ are the only items matching it; also, (ii) the only parent of such patterns, $XX\cdots X$ has the coverage above the threshold, as all the items match it.
Consequently, any pattern with a more than one deterministic element with at least one $1$ is dominated by a MUP and, therefore, is not a MUP.
Moreover, any pattern with some deterministic elements with values $0$ is covered by $t_{|v|+1}$, $t_{|v|+2}$, and $t_{|v|+3}$ and, thus, is not a MUP. As a result, the patterns $P_j$ (with only one deterministic element with value $1$ in the $j$-th element) are the collection of MUPs for $\mathcal{D}$.
There totally are $|E|$ of those patterns (pattern $P_j$ is associated with $e_j$).
Having the maximum coverage level as $\lambda=1$, CE needs to cover all of these patterns; hence, if there exists a subset of size $m$ of items covering these patterns, their equivalent vertices are the subset of $m$ vertices such that each edge is incident to at least one vertex of the subset.

\noindent Given this reduction, there is no polynomial-time algorithm for the CE Problem, unless $P=NP$.
\end{proof}
}

\section{MUP Identification}\label{sec:3}
In this section, we study Problem~\ref{problem1}, MUP identification, and propose efficient search and pruning strategies for it.

\subsection{Na\"{i}ve}\label{subsec:mup-naive}
A single pass over the dataset can suffice, with one counter for each pattern.  With one pass, we obtain the count for each pattern, and can determine which patterns are uncovered.  We can then compare each pair of uncovered patterns, \{$P_i$,$P_j$\}.  If $P_i$ dominates $P_j$, then the latter is not maximal, and can be removed from the list of uncovered patterns discovered.  After all pairs of uncovered patterns have been processed, and the ones not maximal eliminated, then the remaining uncovered patterns are the desired maximal uncovered patterns.

Suppose there are $d$ attributes\del{, each of which takes $c$ possible values}.
Each element of a pattern can either be non-deterministic, or a value from the corresponding attribute.
\new{As such, there are $c_i+1$ choices for each attribute $A_i$, resulting in a total of $c^+_\mathcal{A} = \Pi_{k=1}^d(c_i+1)$ patterns.
We need one counter for each pattern, or a total space of $O(c^+_\mathcal{A})$.  The time taken to find all uncovered patterns is $O(n\times c^+_\mathcal{A})$, where there are $n$ tuples in the dataset.  
Let the total number of uncovered patterns found be $u$.  Then an additional $O(u^2)$ time is required to find the maximal uncovered patterns from among these.  
Thus, the total time required is $O(n~c^+_\mathcal{A} ~+~ u^2)$.
While the additional time due to the second term will usually be smaller than the first term, we note that $u$ could be as large as $c^+_\mathcal{A}$, and is usually much larger than the number of maximal uncovered patterns.
}
As a toy example, consider Example~\ref{ex:1}.
\begin{example}\label{ex:1}
Consider a dataset $\mathcal{D}$ with binary attributes $A_1$, $A_2$, and $A_3$, containing the tuples $t_1: 010$, $t_2: 001$, $t_3: 000$, $t_4: 011$, and $t_5: 001$.
Let the coverage threshold be $\tau = 1$.
\end{example}
The dataset in Example~\ref{ex:1} has one MUP 1XX.
In addition to the MUP, the other 8 uncovered patterns (dominated by the only MUP) are 1X0, 1X1, 10X, 11X, 100, 101, 110, and 111. 


\subsection{Pattern Graph}\label{subsec:patterngraph}

In the na\"{i}ve algorithm, we computed all uncovered patterns, only to eliminate those that were not maximal.  It would appear that we could do less work if we could exploit relationships between patterns.  Specifically, patterns have parent/child relationships, as discussed in \S~\ref{sec:pre}. We can represent relationships between patterns by means of a pattern graph, and use this data structure to find better algorithms.

\begin{definition}[Pattern Graph]
Let $\mathcal{P}$ be the set of all possible patterns defined over $d$ attributes with cardinalities $c$.
Pattern graph of $\mathcal{P}$ is the graph $G(V,E)$ where $V=\mathcal{P}$.
There is an edge between every pair of nodes $P$ and $P'$ that have a parent-child relationship.
Every edge is between two nodes at adjacent levels, the parent node being one level smaller than the child node.
\end{definition}

\begin{figure}[!tb]
\includegraphics[width = 0.43\textwidth]{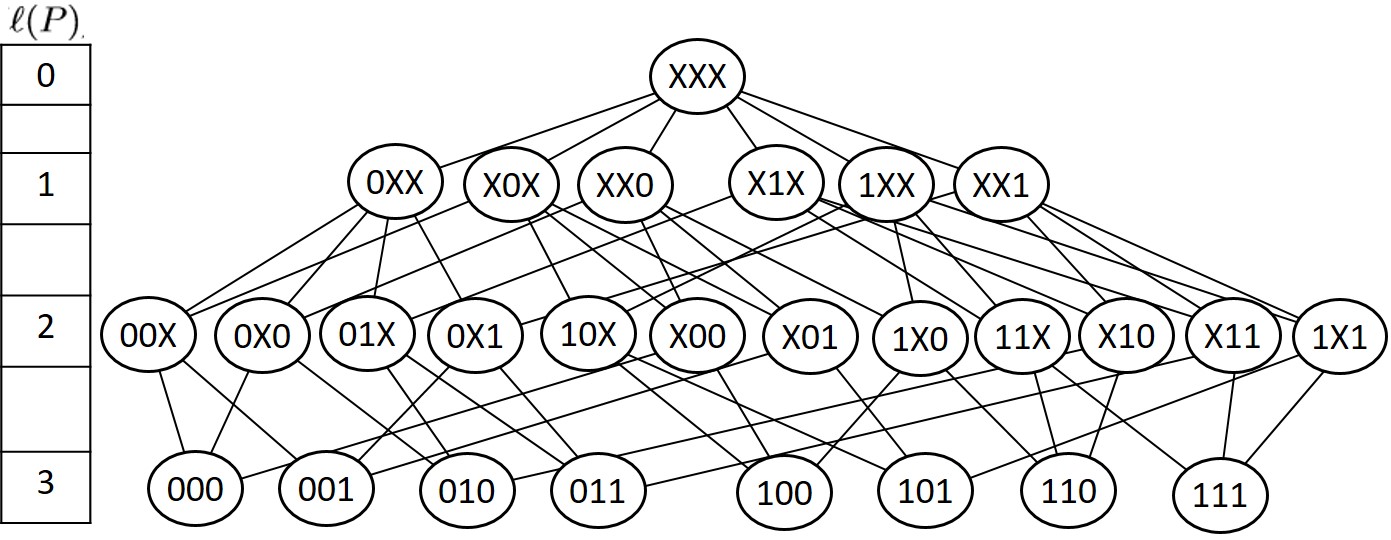}
        \vspace{-2.5mm}
        \caption{\footnotesize The pattern graph for three binary attributes}
        \label{fig:pg1}
        \vspace{-6mm}
\end{figure}

 Figure~\ref{fig:pg1} shows the pattern graph for Example~\ref{ex:1}.
The value of $P[i]$ for a node in the pattern graph is either $X$ or one of the values the corresponding attribute can take. Hence, the total number of nodes in a pattern graph defined over $d$ attributes \del{with cardinalities $c$,} \new{is $c^+_\mathcal{A} = \Pi_{k=1}^d(c_k+1)$}. For instance, the pattern graph in Figure~\ref{fig:pg1} contains $(2+1)^3 = 27$ nodes.
Any pattern graph has only one node $XX\cdots X$ at level 0.
In Figure~\ref{fig:pg1}, the patterns at level 1 are directly connected to $XXX$ as those are its children. The pattern $0XX$, the left-most node at level 1, is connected to the patterns $00X$, $0X0$, $01X$, and $0X1$ at level 2.
That is because, those have exactly one less $X$ and their value for the first attribute is $0$.
\new{
As explained in \S~\ref{sec:pre}, the number of value combinations matching $P$ (with non-deterministic attributes $\mathcal{A}_P$) is $c_{\mathcal{A}_P} = \Pi_{\forall A_j\in\mathcal{A}_P} c_j$.
The number of non-deterministic attributes of a pattern $P$ with level $\ell(P)$ is $d-\ell(P)$.
For example, in Figure~\ref{fig:pg1},
every pattern at level 1 contains $3-1=2$ non-deterministic elements.
Since the cardinality of all attributes is $c_i=2$, the number of value combinations matching each pattern at level 1 is $2\times 2 = 4$. 
Hence, in general, the patterns with smaller levels are more general, i.e., more value combinations match them.
Each node at level $\ell$ contains $d-l$ non-deterministic elements.
There are ${d \choose \ell}$ such combinations in all.
The deterministic elements can take any value in the cardinality of the corresponding attribute.
Hence, for the special case where all attributes have the same cardinality $c_i=c$, total number of nodes at level $\ell$ are ${d \choose \ell}c^\ell$.
}
For example, in Figure~\ref{fig:pg1}, there are ${3 \choose 1} 2^1 = 6$ nodes at level 1 and ${3 \choose 2} 2^2 = 12$ nodes at level 2.
\new{
The node of a pattern $P$ in the pattern graph has $\sum_{\forall A_i\in \mathcal{A}_P}c_i$ edges to the nodes at level $\ell(P)+1$.
If all attributes have equal cardinalities of $c_i=c$, each node at level $\ell$ has $c(d-\ell)$ edges to nodes at level $\ell+1$.
Hence, the total number of edges in such a graph is:}
$$\sum\limits_{\ell=0}^{d-1} c(d-\ell) {d \choose \ell} c^\ell = c\times d\times (c+1)^{d-1}$$
This is confirmed in Figure~\ref{fig:pg1}, where there are totally 54 edges.

\submit{
\revision{
\subsection{\patternbreaker\ (the top-down algorithm)
and \patterncombiner\ (the bottom-up algorithm)}
In this section we propose the top-down and bottom-up algorithms \patternbreaker\ and \patterncombiner, that lead to the design of our final MUP identification algorithm, \deepdiver.
Due to the space limitations, we provide a sketch of these algorithms and refer the reader to the technical report~\cite{techreport} for further details.

\patternbreaker\ starts from the general patterns at the top of the pattern graph and moves down by breaking them down to more specific ones.
It uses the {\em ``monotonicity''} property of coverage to prune some parts of the pattern graph.
That is, if a pattern $P$ is uncovered, all of its children and descendants (the nodes at level greater than $\ell(P)$ that have a path to it) are also uncovered.
Also, none of those children and descendants can be a MUP, even if it has a parent that is covered.
Hence, this subgraph of the pattern graph can immediately get pruned.
In the BFS traversal of the pattern graph, the algorithm enforces Rule 1, states in the following, to ensure that each MUP candidate is generated exactly once.

\stitle{Rule 1.}
A node $P$ with the coverage more than the threshold $\tau$, generates the candidate nodes at level $\ell(P)+1$ by replacing the non-deterministic elements in the right-hand side of its right-most deterministic element with an attribute value.

\begin{theorem}\label{th:rule1completeness}
Enforcing Rule 1 guarantees that each MUP candidate is generated exactly once.
\end{theorem}

\begin{figure}[!t]
\centering
\includegraphics[width = 0.42\textwidth]{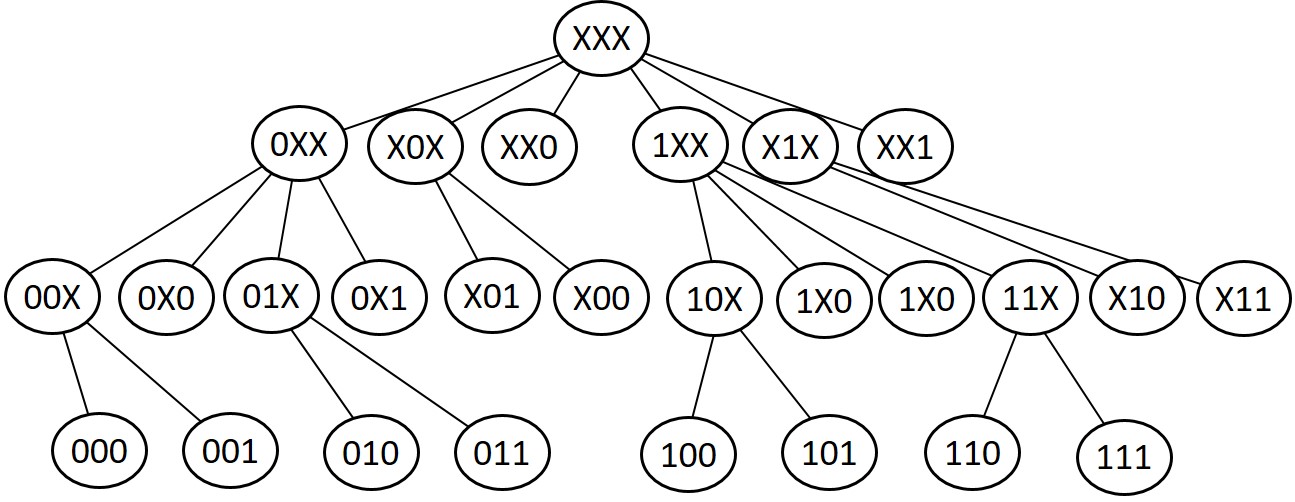}
		\vspace{-2.5mm}
        \caption{\footnotesize Tree transformation for Figure~\ref{fig:pg1} based on Rule 1.}
        \label{fig:pt1}
        \vspace{-6mm}
\end{figure}

By enforcing Rule 1, the pattern graph is transformed to a {\em tree}. 
For example, Figure~\ref{fig:pt1} shows the corresponding tree (generated by following Rule 1) for the pattern graph of Figure~\ref{fig:pg1}.


The other algorithm \patterncombiner, performs a bottom-up traversal of the pattern graph. 
It uses an observation that the coverage of a node at level $\ell$ of the pattern graph can be computed using the coverage values of its children at level $\ell+1$.
The algorithm also uses the monotonicity of the coverage to prevent the complete traversal of the graph.
That is, the coverage of a node is not less than the coverage of any of its children. 
It starts from the most specific patterns, i.e., the patterns at level $d$ of the graph, computes the coverage of each by passing over the data once. The algorithm then, keeps combining the uncovered patterns at each level to get the coverage of the candidate nodes at level $\ell-1$. The uncovered nodes at level $\ell$ that all of their parents at level $\ell-1$ are covered are MUPs.
\patterncombiner\ transforms the pattern graph to a {\em forest}, in order to make sure every node is generated once. Rule 2 guarantees the transformation.
Figure~\ref{fig:pc1} shows the transformation of Figure~\ref{fig:pg1} to a forest, based on Rule 2.

\stitle{Rule 2.} 
A node $P$ with the coverage less than the threshold $\tau$, generates the candidate nodes at level $\ell(P)-1$ by replacing the deterministic elements with value 0 in the right-hand side of its right-most non-deterministic element with $X$.\footnote{\scriptsize \revision{Note that this rule is not specific to the binary attributes. All we require is that one of the values of each attribute is mapped to 0.}}

\begin{figure}[!t]
\centering
\includegraphics[width = 0.40\textwidth]{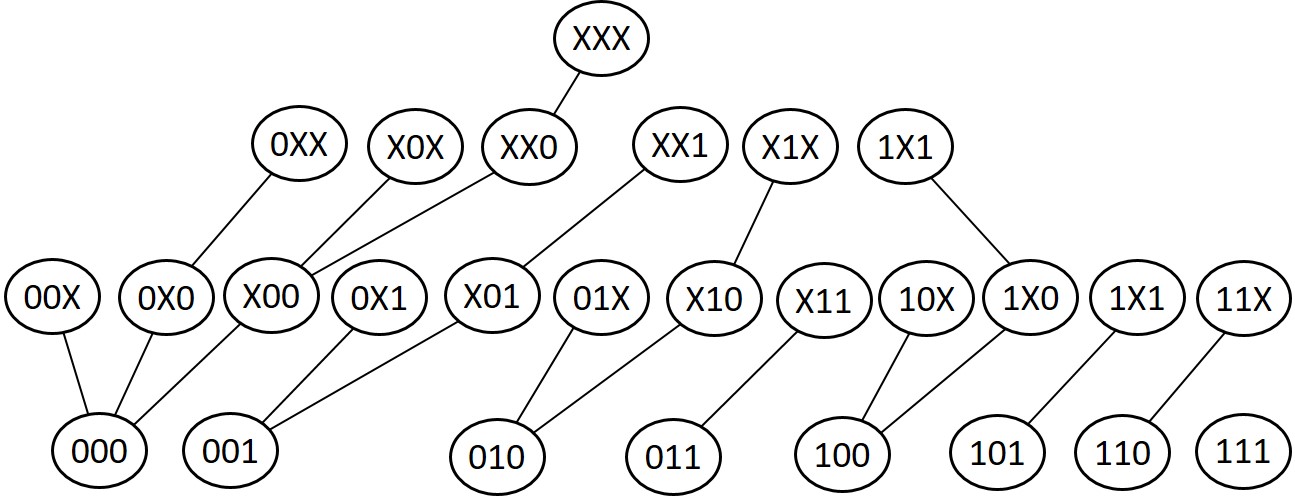}
        \vspace{-3mm}\caption{\footnotesize Forest transformation for Figure~\ref{fig:pg1} based on Rule 2.}
        \label{fig:pc1}
        \vspace{-7mm}
\end{figure}

\begin{theorem}\label{th:rule2completeness}
Enforcing Rule 2 guarantees each MUP candidate is generated exactly once.
\end{theorem}

}
}
\techrep{
\subsection{\patternbreaker: The top-down algorithm}
The root of the pattern graph, level 0, contains a single node, the most general pattern that matches every value combination.
The lower levels of the graph contain more specific patterns that match fewer combinations of values. 
The \patternbreaker\ algorithm starts from the general patterns at the top of the pattern graph and moves down by breaking them down to more specific ones. 
It uses the {\em ``monotonicity''} property of coverage to prune some parts of the pattern graph.
That is, if a pattern $P$ is uncovered, all of its children and descendants (the nodes at level greater than $\ell(P)$ that have a path to it) are also uncovered.
Also, none of those children and descendants can be a MUP, even if it has a parent that is covered.
Hence, this subgraph of the pattern graph can immediately get pruned.

The top-down BFS traversal of the pattern graph starts from the top of the graph and uses a queue to check the nodes level by level.
For every node in the queue, if its coverage is greater than the threshold, it adds each of its unvisited children to the queue.
This method, however, generates each node multiple times by all of its parents.
Therefore, \patternbreaker\ uses the following rule:

\stitle{Rule 1.}
A node $P$ with the coverage more than the threshold $\tau$, generates the candidate nodes at level $\ell(P)+1$ by replacing the non-deterministic elements in the right-hand side of its right-most deterministic element with an attribute value.

\begin{theorem}\label{th:rule1completeness}
Enforcing Rule 1 guarantees that each MUP candidate is generated exactly once.
\end{theorem}
\submit{The proof of Theorem~\ref{th:rule1completeness} is in our technical report \cite{techreport}.}
\techrep{
\begin{proof}
Every node $P$ (other than the root) in the pattern graph
has exactly one parent node $P'$ that is in charge of generating it based on Rule 1. This parent can be found by replacing the right-most deterministic element of $P$ with $X$.

Also, based on Definition~\ref{def:mup}, since the coverage of all of the parents of every MUP $P\in \mathcal{M}$ is more than the threshold $\tau$, the coverage of the generator of $P$ based on Rule 1 (and every node in the path to root) is also more than the threshold. Therefore, following Rule 1 guarantees each MUP candidate once and only once.
\end{proof}
}

\begin{figure}[!t]
\centering
\includegraphics[width = 0.42\textwidth]{figures/PTree.jpg}
		\vspace{-2.5mm}
        \caption{\footnotesize Tree transformation for Figure~\ref{fig:pg1} based on Rule 1.}
        \label{fig:pt1}
        \vspace{-6mm}
\end{figure}

By enforcing Rule 1, every node in the pattern graph gets generated at most once, and hence, the graph is transformed to a {\em tree}. 
For example, Figure~\ref{fig:pt1} shows the corresponding tree (generated by following Rule 1) for the pattern graph of Figure~\ref{fig:pg1}.
In this figure, for instance, consider the node $0XX$ while considering Example~\ref{ex:1}.
$cov(0XX)=5\geq \tau=1$. The right-most deterministic element is element 1 with value $0$.
Thus, it replaces the non-deterministic elements at positions 2 and 3 one at a time, and generates the nodes $0X0$, $0X1$, $00X$, and $01X$.
Similarly, the node $X1X$ generates the nodes $X10$ and $X11$. That is because the right-most non-deterministic element in this pattern is element 2 with value $1$. So it replaces the $X$ at position 3, with attribute values.

\begin{algorithm}[!ht]
\footnotesize
\caption{\footnotesize \patternbreaker \\
		 {\bf Input:} Dataset $\mathcal{D}$ with $d$ attributes having cardinalities $c$ and threshold $\tau$\\
		 {\bf Output:} Maximal uncovered patterns $\mathcal{M}$
		}
\begin{algorithmic}[1]
\label{alg:pb}
	\STATE $Q = \{ XX\cdots X\}$ {\scriptsize \tt // start from the root}
    \STATE $\mathcal{M} = \{\}$; $Q_p=\{\}$
    \FOR{{\scriptsize \tt /* each level of the graph */} $l=0$ to $d$}
    	\STATE {\bf if} $Q$ is empty {\bf then break}
        \STATE $Q_n=\{\}$
    	\FOR{$P\in Q$}
        	\STATE flag = false
        	\FOR{$P'$ in parents($P$)}
        		\STATE {\bf if} $P'\notin Q_p$ or $P'\in\mathcal{M}$ {\bf then} flag = true; {\bf break}
        	\ENDFOR
        	\STATE {\bf if} flag {\bf then continue}
        	\STATE $cnt = cov(P,\mathcal{D})$ \label{line:cov-ref}
            \IF{$cnt < \tau$}
        		\STATE add $P$ to $\mathcal{M}$
        	\ELSE
        	\STATE add children of $P$ based on Rule 1 to $Q_{n}$
        	\ENDIF
    	\ENDFOR
        \STATE $Q_p = Q$; $Q = Q_n$
    \ENDFOR
    \STATE {\bf return} $\mathcal{M}$
\end{algorithmic}
\end{algorithm}

Starting from the root, the algorithm moves level by level, checking the candidate patterns at each level. In addition to the list of nodes at the current level, it maintains the nodes at previous level, and constructs the nodes at the next level.
For every candidate node at the current level, it first checks if any of its parents is uncovered; if so, it marks the node as uncovered without computing its coverage.
Otherwise, the node is added to the set of MUPs if its coverage is less than the threshold. If the node satisfies the coverage threshold, its children, while enforcing Rule 1, are added to the list of candidate nodes of the next level.
We use inverted indices~\cite{invertedindex1} for computing the coverage of a pattern.
Further details about the efficient coverage computation are provided in Appendix~\ref{ap:inverted}.
Using the monotonicity property of the pattern graph, \patternbreaker\
\techrep{(Algorithm~\ref{alg:pb})}
tries to gain performance by pruning some parts of the graph.
A problem with \patternbreaker\ is that it traverses over the {\em covered} regions of the graph while we are looking for the uncovered nodes as the output. This makes its running time dependent on the size of the covered part of the graph.
Moreover, the algorithm may mistakenly generate a large number of candidate nodes from the pruned regions.
For example, in Figure~\ref{fig:pt1}, consider a case where $\tau=1$ and the dataset does not have any items matching the pattern XX1, but contains the items $t_1=000$ and $t_2=010$.
In this example, XX1 is a MUP. \revision{Following Rule 1}, \patternbreaker\ generates the pattern 0XX, and its children in Figure~\ref{fig:pt1} as both $t_1$ and $t_2$ match it (i.e., its coverage is more than the threshold).
One of its children (0X1)  has the coverage 0 (below the threshold), yet is not a MUP, because it is dominated by the MUP XX1.

\subsection{\patterncombiner: The bottom-up algorithm}
As observed in the analysis of \patternbreaker,
its problem is that it explores the covered regions of the pattern graph while the objective is to discover the uncovered ones. As a result, for the cases that there are only a few uncovered patterns, it needs to explore a large portion of the exponential-size graph.
Also, for each candidate node that it cannot prune, the algorithm needs to compute the coverage.

Therefore, we propose \patterncombiner, an algorithm that performs a bottom-up traversal of the pattern graph. 
It uses an observation that the coverage of a node at level $\ell$ of the pattern graph can be computed using the coverage values of its children at level $\ell+1$.
Consider a pattern $P$ and a non-deterministic element $P[i]$ in it. Note that the pattern can contain more than one non-deterministic element.
Consider the children of $P$ in which element $i$ is deterministic.
These children create $c_i$ {\em disjoint partitions} of the matches of $P$ ($c_i$ is the cardinality of attribute $A_i$). Thus, one can compute the coverage of $P$ as the sum of the coverages of those children.
{\em E.g.}, in Figure~\ref{fig:pg1}, $cov(1XX) = cov(1X0) +cov(1X1)$.

The algorithm \patterncombiner\ also uses the monotonicity of the coverage to prevent the complete traversal of the graph.
That is, the coverage of a node is not less than any of the coverage of any of its children. 
It starts from the most specific patterns, i.e., the patterns at level $d$ of the pattern graph, computes the coverage of each by passing over the data once. The algorithm then, keeps combining the uncovered patterns at each level to get the coverage of the candidate nodes at level $\ell-1$. The uncovered nodes at level $\ell$ that all of their parents at level $\ell-1$ are covered are MUPs.

\patterncombiner\ transforms the pattern graph to a {\em forest}, in order to make sure every node is generated once. Rule 2 guarantees the transformation.

\stitle{Rule 2.} 
A node $P$ with the coverage less than the threshold $\tau$, generates the candidate nodes at level $\ell(P)-1$ by replacing the deterministic elements with value 0 in the right-hand side of its right-most non-deterministic element with $X$.\footnote{\scriptsize \revision{Note that this rule is not specific to the binary attributes. All we require is that one of the values of each attribute is mapped to 0.}}

For example, consider the node $P=X01$ in Figure~\ref{fig:pg1}. Its element $P[1]$, with value $X$, is the right-most non-deterministic element in $P$. The only deterministic element with value 0 in the right hand-side of element 1 is element 2.
Therefore, this node generates the node $XX1$ in the bottom-up traversal of the graph.
As another example, consider the node with pattern $P=000$. It does not have any non-deterministic element and therefore, $0$ is considered as the index of the right-most $X$. elements 1, 2, and 3 have values 0 and therefore, this node generates the nodes $00X$, $0X0$, and $X00$.
Figure~\ref{fig:pc1} shows the transformation of Figure~\ref{fig:pg1} to a forest, based on Rule 2.

\begin{figure}[!t]
\centering
\includegraphics[width = 0.40\textwidth]{figures/PTree2.jpg}
        \vspace{-3mm}\caption{\footnotesize Forest transformation for Figure~\ref{fig:pg1} based on Rule 2.}
        \label{fig:pc1}
        \vspace{-7mm}
\end{figure}

\begin{theorem}\label{th:rule2completeness}
Enforcing Rule 2 guarantees each MUP candidate is generated exactly once.
\end{theorem}
\begin{proof}
The proof is based on the fact that every non-leaf node $P$ in the pattern graph has exactly one child node $P'$ that is in charge of generating it based on Rule 2.
This child can be found by replacing the right-most non-deterministic element of $P$ with $0$. 

Also, based on the monotonicity of coverage, all of the children of every MUP $P\in\mathcal{M}$ has the coverage less than $\tau$. Therefore, the coverage of the generator of $P$ (and every node in the path to the starting leaf node) based on Rule 2 is also less than the threshold. Consequently, following Rule 2 guarantees to generate each MUP candidate exactly once.
\end{proof}

\patterncombiner\ prunes a branch once it finds a covered node.
Due to the monotonicity of coverage if a child node is covered, all of its parents (and ancestors) are also covered. Therefore, a node is not pruned by the algorithm unless it is covered. This, together with Theorem~\ref{th:rule2completeness}, provide the assurance that all MUPs are discovered by \patterncombiner.

Starting from the bottom of the pattern graph, the algorithm first iterates over the dataset to compute the coverage of each node at level $\ell = d$.
Maintaining the record of the uncovered nodes at each level, the algorithm generates the nodes at level $\ell-1$ by applying Rule 2 on the uncovered nodes of level $\ell$.
For each node $P'$ at level $\ell-1$, \patterncombiner\ finds a set of nodes at level $\ell$ that create disjoint partitions of the matches of $P'$.
To do so, it finds the right-most non-deterministic element $i$ in $P'$ and considers the children of $P'$ that have a value on element $i$. The coverage of $P'$ is the summation of the coverages of those nodes.
If the coverage $P'$ is less than $\tau$ its record is maintained as a candidate MUP.
After finding the uncovered nodes at level $\ell-1$, the algorithm adds the MUPs at level $\ell$ to the output set. 
The pseudocode of \patterncombiner\ is in \submit{the technical report \cite{techreport}}\techrep{Algorithm~\ref{alg:pc}}.

\techrep{
\vspace{3mm}
\begin{algorithm}[!t]
\footnotesize
\caption{\footnotesize \patterncombiner \\
		 {\bf Input:} Dataset $\mathcal{D}$ with $d$ attributes having cardinalities $c$ and threshold $\tau$\\
		 {\bf Output:} Maximal uncovered patterns $\mathcal{M}$
		}
\begin{algorithmic}[1]
\label{alg:pc}
	\STATE count = {\it new} hash()
	\FOR{$\forall P\in \{ v_1 v_2\cdots  v_d ~|~ v_i\in c[A_i]\}$} 
    	\STATE cnt$= cov(P,\mathcal{D})$
        \STATE {\bf if} cnt$<\tau$ {\bf then} count$[P] = $ cnt
    \ENDFOR
	\STATE {\bf if} count is empty {\bf then return} $\emptyset$
    \FOR{$\ell = 0$ to $d$}
    	\STATE nextCount = {\it new} hash()
        \FOR{$P$ in count.keys}
        	\STATE $\mathcal{P}' =$ generates nodes at $\ell-1$ based on Rule 2 on $P$
            \FOR{$P'\in\mathcal{P}'$}
            	\STATE $i = $ the index of right-most $X$ in $P'$
                \STATE $\mathcal{P}'' = \{ P''|\forall j\neq i\,:P''[j]=P' \mbox{ and } P''[i]\in c[A_i] \}$
            	\STATE $cnt = \sum\limits_{\forall P''\in \mathcal{P}''} (\mbox{count}[P'']$ {\bf if} $P''\in$ count.keys {\bf else} $\tau)$
        		\STATE {\bf if} $cnt <\tau$ {\bf then} nextCount$[P'] = $ cnt
            \ENDFOR
        \ENDFOR
        \FOR{$P$ in count.keys}
            \IF{parents($P)\cap$ nextCount.keys $=\emptyset$}
            	\STATE add $P$ to $\mathcal{M}$
            \ENDIF
        \ENDFOR
        \STATE {\bf if} nextCount is empty {\bf then break} 
        \STATE count = nextCount
    \ENDFOR
    \STATE {\bf return} $\mathcal{M}$
\end{algorithmic}
\end{algorithm}
}
}

\subsection{\deepdiver: Fast search space pruner}\label{subsec:deepdiv}
\patternbreaker\ traverses over the covered regions of the pattern graph before it visits the uncovered patterns.
Therefore, it does not perform well when a large portion of the pattern graph is covered.
Conversely, \patterncombiner\ traverses over the uncovered nodes first; so it will not perform well if most of the nodes in the graph are uncovered.
When most MUPs are in the middle of the graph, both algorithms do poorly because they have to traverse about half of the graph.
In this subsection, we propose \deepdiver, an algorithm that 
tries to quickly identify some MUPs and use them to prune the search space.

The monotonicity property creates an opportunity to prune the search space: {\em none of the ancestors or descendants of a given MUP can be MUP}.
\patternbreaker\ tends to traverse level by level over the covered regions of the pattern graph before it visits the uncovered patterns.
As a result, the moment it reaches out to a MUP, it already has visited its ancestors and does not take the advantage of pruning the nodes dominating the MUPs.
\patterncombiner, on the other hand, starts off in the uncovered regions; initially, the nodes being visited early are at the bottom of the pattern graph. It gradually moves up level by level until it hits the MUPs. Therefore, when the MUPs are discovered the descendants have already been visited and, as a result, \patterncombiner\ does not take the advantage of pruning the nodes dominated by MUPs.

With the above observations, we propose \deepdiver, 
a search algorithm that tends to quickly find MUPs, and use them to limit the search space by pruning the nodes both dominating and dominated by the discovered MUPs.
Since each MUP is the child of a covered node, instead of scanning through the covered/uncovered patterns level by level, \deepdiver\ 
takes a path down to find an uncovered node.

\begin{algorithm}[t]
\footnotesize
\caption{\revision{
        \footnotesize \deepdiver \\
		 {\bf Input:} Dataset $\mathcal{D}$ with $d$ attributes having cardinalities $c$ and threshold $\tau$\\
		 {\bf Output:} Maximal uncovered patterns $\mathcal{M}$
		 }
		}\label{algo:deepdiver}
\begin{algorithmic}[1]
	\STATE Let $S$ = an empty stack 
    \STATE push $X\cdots X$ to $S$
    \WHILE{$S$ is not empty} 
    \STATE{$P$ = pop a node from $S$}  \label{algo:deepdiver:divedownstart}
    \STATE $uncoveredFlag =$ a flag indicating if $P$ is uncovered
    \IF{ $P$ is dominated by $\mathcal{M}$} \label{algo:deepdiver:prunestart}
    \STATE continue \label{algo:deepdiver:pruneend}
    \ELSIF {$P$ dominates $\mathcal{M}$} \label{algo:deepdiver:prune2start}
    \STATE $uncoveredFlag = true$ \label{algo:deepdiver:prune2end}
    \ELSE 
    \STATE{$cnt = cov(P,\mathcal{D})$}   
    \STATE $uncoveredFlag = cnt < \tau$ 
    \ENDIF \label{algo:deepdiver:divedownend}
    \IF{$uncoveredFlag$ is $true$} \label{algo:deepdiver:goupstart}
    \STATE Let $S'$ = an empty stack
    \WHILE{$S'$ is not empty} 
    \STATE $P'$ = pop a node from $S'$
    \STATE $\mathcal{P}' =$ generates parent nodes of $P'$ by replacing one deterministic cell with $X$.
            \FOR{$P''\in\mathcal{P}'$}
            \STATE $cnt' = cov(P'',\mathcal{D})$
            \STATE {\bf if} $cnt' < \tau$ {\bf then} push $P''$ to $S'$; {\bf break}
             \ENDFOR
            \STATE add $P$ to $\mathcal{M}$ \label{algo:deepdiver:findmup}
    
    \ENDWHILE\label{algo:deepdiver:goupend}
    \ELSE \label{algo:deepdiver:keepdivingstart}
    
    \STATE $Q =$ generate nodes on $P$ and $c$ based on Rule 1
    \STATE push $Q$ to $S$
    \ENDIF \label{algo:deepdiver:keepdivingend}
    \ENDWHILE
    \STATE {\bf return} $\mathcal{M}$
\end{algorithmic}
\end{algorithm}

Initially, \deepdiver (Algorithm~\ref{algo:deepdiver}), following a DFS strategy, takes a path down until it reaches into an uncovered region in the graph.
However, the discovered uncovered pattern is not necessarily a MUP, as some of its other parents (other than its generator) might also be uncovered.
For instance in Example~\ref{ex:1}, assume that in the first iteration, the algorithm take the path XXX $\rightarrow$ X0X $\rightarrow$ 10X.
The nodes XXX and X0X are covered, but 10X is not.
Still the uncovered node 10X is not a MUP as it has the uncovered parent 1XX. 
Therefore, after finding an uncovered node, \deepdiver\
changes direction and starts moving up to find a MUP.
To do so, it checks the parents of the current node to see if any of them are uncovered. If there exists such a parent, it moves to the parent and continues until it finds a MUP.
Upon discovering a MUP, \deepdiver\ prunes all of its ancestors and descendants, and continues the search for other MUPs in the regions that are still not pruned.
The algorithm stops when all of the nodes in the pattern graph are pruned.

Here we extend the notion pattern dominance to MUP dominance, as follows:

\begin{definition}[MUP Dominance]\label{defn:dominance}
Given a pattern $P$ and a set of MUPs $\mathcal{M}$, $P$ is dominated by $\mathcal{M}$, if there exists a pattern $P'\in\mathcal{M}$ such that $P$ is dominated by $P'$.
Similarly, $P$ dominates $\mathcal{M}$, if there exists a pattern $P'\in\mathcal{M}$ such that $P$ dominates $P'$.
\end{definition}

Based on the Definition~\ref{defn:dominance}, a node being dominated by MUPs is not a MUP. 
\deepdiver\ uses this property to limit the search space by pruning all descendants of the MUPs. Similarly, the nodes that dominate MUPs are out of the search space.
We use inverted indices for efficiently checking MUP dominance.
See \submit{the technical report \cite{techreport}}\techrep{Appendix~\ref{ap:dominancechecking}} for details.

\section{Coverage Enhancement}\label{sec:4}
\techrep{
So far in this paper, we discussed how to address Problem 1, by discovering the maximal uncovered patterns.
A natural question, after discovering the MUPs is how to collect the data based on them.
In this section, we seek to answer this question, i.e., Problem 2.
}


\definecolor{bblue}{HTML}{4F81BD}
\definecolor{rred}{HTML}{C0504D}
\definecolor{ggreen}{HTML}{9BBB59}
\definecolor{ppurple}{HTML}{9F4C7C}

\begin{figure*}[!ht]
    \begin{minipage}[t]{0.25\linewidth}
    \centering
    \vspace{-37mm}
    \includegraphics[width = 1.0\textwidth]{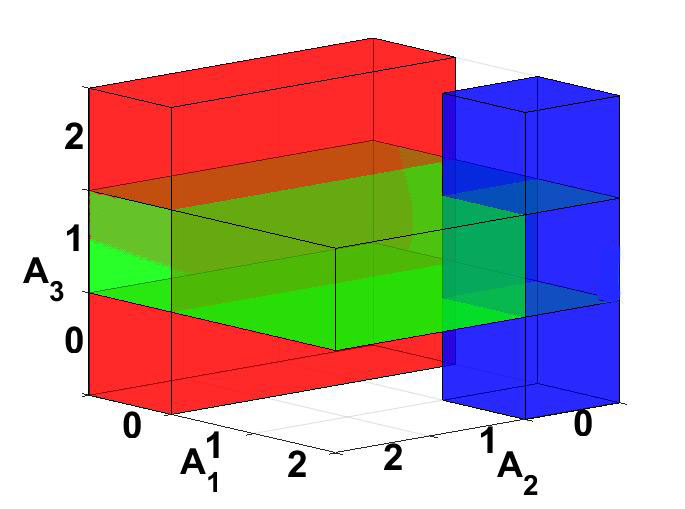}
        \vspace{-8mm}\caption{\footnotesize The matches of three patterns \revision{XX1: green, 0XX: red}, and 20X: blue in a data set with ternary attributes $A_1$, $A_2$, and $A_3$.}
        \label{fig:hs1}
    \end{minipage}
    \hfill
    \begin{minipage}[t]{0.3\linewidth}
        \centering
        \begin{tikzpicture}
        \begin{axis}[
        width=\textwidth,
        height=4cm,
            ymin=1,
            ymax=1250,
            xmin=-1,
            xmax=12,    
            xtick=data,
            ytick={250, 500, 750, 1000},
            ybar,
            xlabel={Level},
            y label style={at={(axis description cs:0.1,.5)}},
            ylabel={\# of MUPs},
            nodes near coords,
            every node near coord/.append style={font=\scriptsize},
            major x tick style = transparent,
            xticklabel style={font=\scriptsize},
            yticklabel style={font=\scriptsize},
            ymajorgrids=true
            ]
        \addplot [color=blue,fill=bblue] table[x=level, y=count, col sep=comma] {data/mupDist.csv};
        \end{axis}
        \end{tikzpicture}
         \vspace{-4mm}\caption{\footnotesize Distribution of MUPs in AirBnB dataset for $n=1000$ items and $d=13$ attributes, while $\tau=50$.}
        \label{fig:mupdist1}
    \end{minipage}
    \hspace{1mm}
    \begin{minipage}[t]{0.25\linewidth}
        \centering
        \vspace{-35mm}
        \includegraphics[width = 0.85\textwidth]{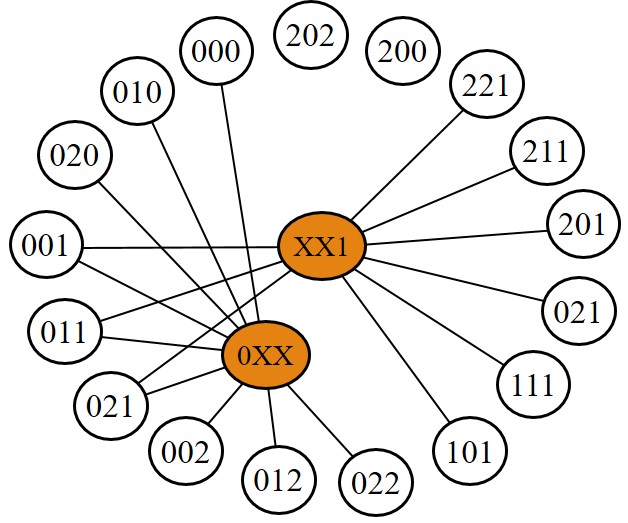}
        \caption{\footnotesize The bipartite graph for $\lambda=1$ for Figure~\ref{fig:hs1}.}
        \label{fig:bipartite}
    \end{minipage}
    \hspace{1mm}
    \begin{minipage}[t]{0.14\linewidth}
        \centering
        \vspace{-33mm}
        \small
        \begin{tabular}{|@{}c@{}|@{}c@{}|}
          \hline
          $P_1$ & XX01X \\ \hline
          $P_2$ & 1X20X \\ \hline
          $P_3$ & XXXX1 \\ \hline
          $P_4$ & 02XXX \\ \hline
          $P_5$ & XX11X \\ \hline
          $P_6$ & 111XX \\ \hline
          $P_7$ & X020X \\ \hline
    	\end{tabular}
        \vspace{6mm}
        \caption{\footnotesize MUPs of Example~\ref{ex:datacollection}.}
        \label{fig:ex2}
    \end{minipage}
    \hspace{-2mm}
    \vspace{-6mm}
\end{figure*}

Every MUP represents a part of the value combinations space for which there are not enough observations in the dataset. 
For example, consider a dataset defined over three ternary attributes $A_1$, $A_2$, and $A_3$, in which MUPs are XX1, 0XX, and 20X.
Figure~\ref{fig:hs1} shows the matches for the patterns XX1, 0XX, and 20X, as the red, green, and blue cubes, respectively.
The more general MUPs show larger uncovered regions in the data. For example in Figure~\ref{fig:hs1}, the cube of the more general patterns XX1 and 0XX contain 9 combinations, whereas the one for the pattern 20X contains 3 combinations.
While we may be willing to leave some small regions uncovered, we would like to cover at least the large ones.
For example, not having enough representatives in a dataset for single black males over the age of sixty may be less of a problem than not having enough black males.
Figure~\ref{fig:mupdist1} shows the distribution of the levels of the MUPs for a real experiment on our AirBnB dataset (c.f. \S~\ref{sec:exp}) with $n=1000$ items and $d=13$ attributes, while $\tau=50$.
There are several thousand MUPs in this setting. This indicates the high expense of covering them all.
However, as the distribution has a bell-curve shape, while most MUPs appear at levels 5 and 6, there is only one MUP at level one and less than forty in level two.


Data acquisition is usually costly.  If the data are obtained from some third party, there may be direct monetary payment.  If the data are directly collected, there may be a data collection cost.  In all cases, there is a cost to cleaning, storing, and indexing the data.
To minimize these costs, we would like to acquire as few additional tuples as possible to meet our coverage objective.

\revision{
Before discussing further technical details, we would like to emphasize the necessity of {\em human-in-the-loop} after the MUP discovery.
Not all the MUPs that are discovered are meaningful and some of them may even be invalid.
Therefore after the MUP discovery, a domain expert should evaluate and mark out the MUPs that are not problematic.
In addition, we require the expert to set up a {\em validation oracle} as a set of rules that identifies if a value combination is semantically correct or not.
For example, any value combination that contains \{gender=Male, isPregnant=True\} is semantically incorrect.

\begin{definition}[Validation Rule]\label{def:vr}
A validation rule is a set of pairs $\{\langle A_i, V_i\rangle,\cdots\}$, where $A_i$ is an attribute and $V_i$ is a set of values for $A_i$.
Given a pattern $P$ and a validation rule $R$, we say $P$ satisfies $R$, if $\forall \langle A_i, V_i\rangle\in R$: $P[i]\in V_i$.
\end{definition}

\begin{definition}[Validation Oracle]\label{def:vo}
A validation oracle contains a collection of validation of rules. Given a pattern $P$, the oracle returns true if $P$ satisfies {\em none} of its validation rules. It returns false otherwise.
\end{definition}
The human expert sets up the validation oracle by identifying the collection of validation rules.
Later on, in this section, we call the validation oracle to enforce the rules that result in the semantic appropriateness (validity) of the output of the coverage enhancement algorithm.
}

As formally defined in \S~\ref{sec:pre} as Problem 2, for a given value $\lambda$, our objective is to collect the minimum number of additional tuples such that after the data collection the maximum covered level of $\mathcal{D}$ is at least $\lambda$.
\new{
It is not enough to cover the MUPs with levels $\ell\leq\lambda$, we must cover all uncovered patterns (not necessarily maximal) with level $\ell(P)=\lambda$. 
We use $M_{\lambda}$ to refer to the set of uncovered patterns at level $\lambda$.
Finding this set is not difficult: details in \submit{\cite{techreport}}\techrep{Appendix~\ref{ap:ptc}}.
}

We take Example~\ref{ex:datacollection} as a running example in this section.

\begin{example}\label{ex:datacollection}
Consider
a dataset $\mathcal{D}$ with 5 attributes $A_1$ to $A_5$ where $A_2$ and $A_3$ are ternary attributes while the other attributes are binary. Suppose the maximal uncovered patterns are as shown in Figure~\ref{fig:ex2}.
\end{example}

\noindent Let $\lambda$ be 2. Uncovered patterns of Example~\ref{ex:datacollection} with level $2$, i.e. $M_\lambda$, are $P_1$ to $P_6$.
Our objective is to cover all these patterns. 

If we use an alternative problem formulation, the set of patterns to cover may be different.  For example, if we wish to cover all patterns with value count of at least $v$, we must enumerate uncovered patterns that meet this value count criterion.  Once this (straightforward) enumeration is completed, thereafter the alternative problem formulation can be solved in exactly the same way.
A potential naive idea may be to acquire enough additional tuples separately for each pattern we are required to cover.
However, this ``solution'' acquires much more than the minimum data required, because each tuple may contribute to the coverage of multiple patterns.  What we need is to choose tuples carefully to find the minimum number needed to cover all the uncovered patterns of interest.
This problem translates to a hitting set~\cite{vazirani2013approximation} instance.

\vspace{-1mm}
\subsection{Transformation to hitting set}
\stitle{Hitting Set Problem:} Given a set $\mathcal{U}$ of elements and a collection $\mathcal{S}$ of non-empty subsets of $\mathcal{U}$, the objective is to find the smallest subset of elements $C\subseteq \mathcal{U}$ such that $\forall S\in\mathcal{S},~\exists~e\in C$ where $e\in S$.

The transformation is as follows:
\begin{itemize}[leftmargin=*]
\itemsep0em 
\item $\mathcal{U}$: The set of possible value combinations translates to the universe of items $\mathcal{U}$.
\item $\mathcal{S}$: Each uncovered pattern in $M_\lambda$ is the representative of the set of value combinations matching it. Hence, $\mathcal{S}$ is the collection of sets represented by the uncovered patterns.
\end{itemize}
This can be viewed as a bipartite graph with the value combinations in the first part and the uncovered patterns in the second part.
There is an edge between a combination and a pattern if the combination matches the pattern.
The objective is to select the minimum number of nodes in the first part that hit all the patterns in the second part.
Figure~\ref{fig:bipartite} shows the bipartite graph for $\lambda=1$ in Figure~\ref{fig:hs1}.

While the hitting set problem is NP-complete, the greedy approach guarantees a logarithmic approximation ratio for it~\cite{vazirani2013approximation}.
At every iteration, the greedy approximation algorithm selects the item (value combination)
that hits the maximum number of un-hit sets (patterns). It continues until all the sets get hit.
In Figure~\ref{fig:bipartite}, for instance, a run of the greedy algorithm picks 001 as it hits both patterns and then stops.

At every iteration, the algorithm needs to find the value combination that hits the maximum number of un-hit patterns. This is inefficient due to the exponentially large number of the value combinations and potentially exponential number of the patterns to hit.
Hence, in the following, we develop an efficient implementation of the greedy algorithm.


\begin{figure}[!t]
\begin{minipage}[t]{0.45\linewidth}
\centering
\vspace{-30mm}
\begin{scriptsize}
    \begin{tabular}{|@{}c@{}|@{}c@{}|@{}c@{}|@{}c@{}|@{}c@{}|@{}c@{}|@{}c@{}|}
        \hline
         & $P_1$& $P_2$& $P_3$& $P_4$& $P_5$& $P_6$ \\ \hline
         $A_1 = 0$& 1& 0& 1& 1& 1&0 \\ \hline
         $A_1 = 1$& 1& 1& 1& 0& 1&1 \\ \hline
         $A_2 = 0$& 1& 1& 1& 0& 1&0 \\ \hline
         $A_2 = 1$& 1& 1& 1& 0& 1&1 \\ \hline
         $A_2 = 2$& 1& 1& 1& 1& 1&0 \\ \hline
    \end{tabular}
\end{scriptsize}
        \vspace{10mm}\caption{\footnotesize The inverted indices for values of attributes $A_1$ and $A_2$ in Example~\ref{ex:datacollection}.}
        \label{fig:inv1}
        \vspace{-5mm}
\end{minipage}
\hfill
\begin{minipage}[t]{0.50\linewidth}
\centering
\includegraphics[width = \textwidth]{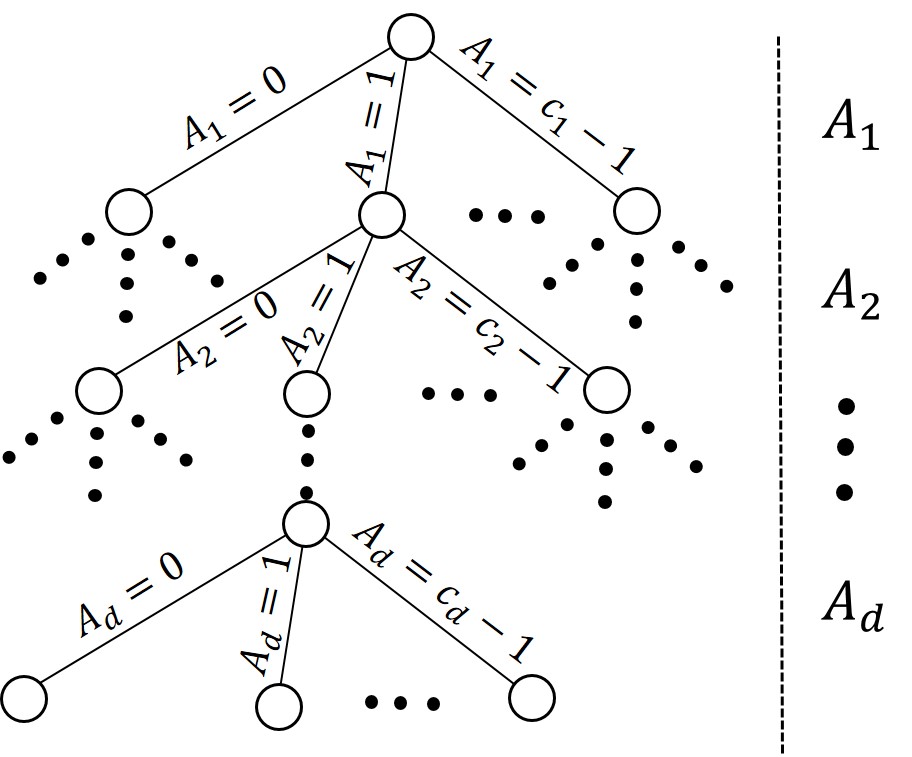}
        \vspace{-6mm}\caption{\footnotesize The tree data structure used for computing the number of patterns a value combination hits.}
        \label{fig:dctree}
\end{minipage}
\vspace{-6mm}
\end{figure}

\subsection{Efficient implementation of the greedy algorithm}\label{subsec:HS-efficient}
Consider the set of patterns we want to hit by the value combinations.
We use inverted indices to keep track of the uncovered patterns.
The $i$-th element of each pattern is either a value of $A_i$ or is non-deterministic.
For each attribute value $v_j$ of $A_i$, we create an inverted index to point to the patterns with either a non-deterministic element or an element with value $v_j$ in the $i$-th position.
We use this to filter out the patterns that do not match a value combination with value $v_j$ on $A_i$.
For example, Figure~\ref{fig:inv1} shows the indices for the values of $A_1$ and $A_2$ for $P_1$ to $P_6$ in Example~\ref{ex:datacollection}.
The first row shows the inverted index for $A_1=0$. All columns of the row, except $P_2$ and $P_6$ are 1; that is because a value combination having $A_1=0$ will not match $P_2$ or $P_6$, but still can match other patterns.
Having the inverted indices for the attribute values, we use a tree data structure and design a greedy threshold-based algorithm to find the value combination that hits the maximum number of remaining patterns.

Consider a full tree data structure (Figure~\ref{fig:dctree})
with depth $d$ (the number of attributes).
Let $m$ be the number of patterns that we want to hit.
The root has $c_1$ children, each showing a value for attribute $A_1$.
Similarly, every node at level $i$ has $c_i$ children, each representing an attribute value of $A_i$.
For instance, in Example~\ref{ex:datacollection}, the depth of the tree is $d=5$; the root node has two children standing for $A_1=0$ and $A_1=1$. Since $c_2=3$, each of these two nodes will have three children $A_2=0$, $A_2=1$, and $A_2=2$.
\submit{
Each path in the tree from the root to a leaf shows a value combination. In~\cite{techreport}, we show how to use the inverted indices for finding the number of remaining patterns a value combination hits.
}

\techrep{
Consider a path in the tree from the root to a leaf; note that it shows a value combination.
We associate a bit vector of size $m$ with the root. Initially, all the values in the bit vector are 1, meaning that all patterns still remain to get hit.
To find the number of remaining patterns a value combination hits, one can follow the path from the root to it and, along the way, for every edge $A_i=v_j$ update the bit vector showing the patterns it may still hit.
This is done efficiently by applying a binary AND operation between the inverted index of $A_i=v_j$ and the current bit vector (initially the bit vector of the root). The number of 1's in the final bit vector after reaching to the leaf node is the number of patterns it hit.
For instance, in Example~\ref{ex:datacollection} (for patterns $P_1$ to $P_6$), assume that still none of the patterns are hit.
Thus, the bit vector associated with the root is $111111$, showing that none of $P_1$ to $P_6$ is so far hit.
Now, consider the value combination $12110$. In the tree data structure, following from the root to the leaf node, we first reach to the edge $A_1=1$. From Figure~\ref{fig:inv1}, the inverted index for this is $111011$.
Hence, after the first AND operation, the patterns it may still hit are: $111111 \land 111011 = 111011$. The next value is $A_2=2$; having the inverse index of this as $111110$, the bit vector gets updated as $111011\land 111110 = 111010$. Following this on the inverse indices of $A_3=1$, $A_4=1$, and $A_5=0$, the final result is $000010$, that is $12110$ only hits $P_5$.}
Using this structure, we design the threshold-based algorithm \greedy \submit{(pseudo-code in\cite{techreport})} that uses the hit-count of its best-known value combination to prune the tree.
The algorithm traverses through the tree data structure in a DFS manner and, starting from the root, it calls the validation oracle before generating each child of a node, to make sure it is semantically meaningful.
As a result, it will output only the value combination that are valid.
The algorithm computes the bit vector of valid children of a node
by applying the binary AND operation between the current bit vector\techrep{ ($filter$)} and the inverted index of each of its children.
The algorithm uses the best-known value combination as a lower-bound threshold to prune the tree.
If the children of the current node are leaf nodes and the best of them hits more patterns than the best-known value combination, the best option gets updated.
For the other nodes, the algorithm prioritizes the children of the current node based on the number of 1's in their bit-vectors. Then, starting from the one with the max count, until the number of 1's in the bit-vectors is more than the best-known hit count, it recursively checks if it can find a better value combination in the subtrees of the children.
\submit{
The algorithm keeps collecting the value combinations that hit the maximum number of remained patterns until all of the patterns in $M_\lambda$ get hit.
}
\techrep{
Following this algorithm for patterns $P_1$ to $P_6$ in Example~\ref{ex:datacollection}, a value combination that hits the maximum number of patterns is 02011, hitting the patterns $P_1, P_3$, and $P_4$.

Now, using this algorithm, we can implement the \greedy approximation algorithm (Algorithm~\ref{alg:wtc}).
The inputs to the algorithm are the set of uncovered patterns at level $\lambda$, i.e., $M_\lambda$, and it returns the set of value combinations to collect. It keeps collecting the value combinations that hit the maximum number of remained patterns until all of the patterns in $M_\lambda$ get hit.
Following the greedy algorithm on Example~\ref{ex:datacollection} on $P_1$ to $P_6$, it suggests collecting three value combinations 02011, 02111, and 10201.

\begin{algorithm}[b]
\footnotesize
\caption{\footnotesize {\bf hit-count} \\
         {\bf Input:} The bit vector $filter$, best-known hit-count $c_{max}$, inverse indices $I$, and the current level $i$\\
        }
\begin{algorithmic}[1]
\label{alg:hitcount}
    \FOR{value $v_j$ in $c_i$}
        \STATE $bv[v_j] = filter \land I_{A_i=v_j}$
        \STATE $cnt[v_j] = $ number of 1's in $bv[v_j]$
    \ENDFOR
    \IF{$i==d$}
        \STATE $v_{max} = \mbox{argmax } cnt[j]$
        \STATE {\bf return} $max(cnt[v_{max}],c_{max})$, $v_{max}$
    \ENDIF
    \STATE sort values $v_j$ in $c_i$ based on $cnt[j]$, descendingly
    \FOR{$j=1$ to $c_i$}
        \IF{$cnt[v_j]<c_{max}$}
            \STATE {\bf break} 
        \ENDIF
        \STATE $tmp_{cnt},tmp = $ {\bf hit-count}($bv[v_j]$, $c_max$, $I$, $i+1$)
        \IF{$tmp_{cnt}>c_{max}$}
            \STATE retval = join($tmp,v_j$)
            \STATE $c_{max}=tmp_{cnt}$
        \ENDIF
    \ENDFOR
    \STATE {\bf return} $c_{max}$, retval
\end{algorithmic}
\end{algorithm}

\begin{algorithm}[b]
\footnotesize
\caption{\footnotesize \greedy \\
         {\bf Input:} The set of uncovered patterns to hit $M$\\
         {\bf Output:} The set of value combinations to collect
        }
\begin{algorithmic}[1]
\label{alg:wtc}
    \STATE $filter=$ a bit vector of size $|M|$ with all bits being $1$
    \STATE $I = $ inverted indices of attribute values to $M$
    \STATE $V = \{\}$
    \WHILE{$\exists ~1\leq j\leq |M|$ s.t. $filter[j]=1$}
        \STATE $c_{max}$, $v$ = {\bf hit-count}($filter$, $0$, $I$, $1$)
        \STATE add $v$ to $V$ and update the $filter$ accordingly
    \ENDWHILE
    \STATE {\bf return} $V$
\end{algorithmic}
\end{algorithm}
}
\revision{
As an implementation note, when the algorithm selects a value combination, it takes the intersection of the patterns it hits to find a more general pattern that any of its matching value combinations hit the same set of patterns. It provides more freedom to the user in the data collection.
}
\section{Experimental Evaluation}\label{sec:exp}
We conducted comprehensive experiments on real data to both validate our proposal and to study the efficiency of the proposed algorithms in practice.

\subsection{Experimental Setup}
\vspace{-2mm}\stitle{Datasets.} Three real datasets were used for the experiments:
\begin{itemize}[leftmargin=*]
\itemsep0em 
\item {\it COMPAS\footnote{\scriptsize \url{www.propublica.org/datastore/dataset/compas-recidivism-risk-score-data-and-analysis}}:} 
ProPublica is a nonprofit organization that produces investigative journalism. They collected and published the COMPAS dataset as part of their investigation into racial bias in criminal risk assessment.
The dataset contains demographics, recidivism scores, and criminal offense information for 6,889 individuals. We used the attributes {\tt sex} (0: male and 1: female), {\tt age} (0: under 20, 1: between 20 and 39, 2: between 40 and 59, and 3: above 60), {\tt race} (0: African-American, 1: Caucasian, 2: Hispanic, and 3: other races), and {\tt marital status} (0: single, 1: married, 2: separated, 3: widowed, 4: significant other, 5: divorced, and 6: unknown) for studying the coverage.\item {\it AirBnB\footnote{\scriptsize \url{www.airbnb.com}}:} 
AirBnB is a popular online peer to peer travel marketplace that provides a framework for people to lease or rent short-term lodging. We use a collection of the information of approximately 2 million \emph{real} properties enlisted in AirBnB.
The website provides 41 attributes for each property, out of which 36 are boolean attributes, such as \texttt{\small TV}, \texttt{\small internet}, \texttt{\small washer}, and \texttt{\small dryer}. 
\item {\it BlueNile\footnote{\scriptsize \url{www.bluenile.com/diamond-search?}}:}
Blue Nile is the largest online diamond retailer globally.  We collected its catalog containing 116,300 diamonds at the time of access.
The dataset has 7 categorical attributes for the diamonds, namely {\tt \small shape}, {\tt \small cut}, {\tt \small color}, {\tt \small clarity}, {\tt \small polish}, {\tt \small symmetry}, and {\tt \small florescence} with cardinalities 10, 4, 7, 8, 3, 3, and 5, respectively.
\end{itemize}

\stitle{Hardware and Platform.}
The experiments were conducted on a Linux machine with a 3.8 GHz Intel Xeon processor and 64 GB memory.
The algorithms were implemented in Java.

\stitle{Experiments plan.}
We want to study coverage in real data.  Are there indeed uncovered patterns?
Are these likely to cause errors in prediction or analysis?
We also want to study the performance of the proposed algorithms, both for MUP identification and for coverage enhancement.
We studied both these sets of questions on all three datasets.  
In the interests of space, we report only the most salient results.  In particular, we focus on the COMPAS dataset for the first set of questions, since the negative consequences of lack of coverage are potentially more severe than for other datasets where the impact may be limited to errors in analytical results.  We focus on the AirBnB dataset for the performance questions, since this is the largest of the three datasets.  Since attributes of AirBnB are binary, we supplement with the BlueNile dataset to highlight situations where its much higher cardinality of attribute values matters.


\subsection{Validation}\label{subsec:exp-validation}
\subsubsection{Issues with Coverage in Real Data} \label{subsec:exp-validation-1}
 
Consider four demographical attributes {\tt \small sex}, {\tt\small age}, {\tt\small race}, and {\tt\small marital status}, as the attributes of interest in the COMPAS dataset.
We investigate the lack of coverage in this dataset with regard to these four attributes to show the risks of using it for important tasks such as
assessing a criminals' likelihood to re-offend and sentencing them accordingly. Setting the threshold to $10$, 
all the single attribute values contain more instances than the threshold. Still, there totally are 65 MUPs in this dataset, out of which $19$ are in level $\ell=2$, 23 in level $\ell=3$, and 23 in level $\ell=4$.
Besides other MUPs, the existence of 19 level two MUPs in the dataset emphasizes the potential of bad predictions for large spaces in the data cube.
To highlight one example, the MUP XX23 shows the lack of coverage for {\it widowed Hispanics}.
The dataset contains only two instances matching this pattern and interestingly both of them have offended multiple times.
In the absence of enough representatives for the minority subgroup,
the trained model, will likely generalize, not sticking to the couple of examples it has seen for the minority subgroup.
However, the generalization becomes problematic when the ``behavior'' in the subgroup is different and the generalization is misleading.
This means that the model {\em may} not do a good job in modeling the behaviour of minority sub-groups.
Of course, we use the MUP identification to raise a signal for these lack of coverage cases. Whether or not it is problematic, needs the human expert in the loop.
Lack of coverage in this dataset shows the risk of using it for predicting the behavior of individuals for under-represented groups; it, therefore, questions the decisions made based on such predictions.
To show case an effect of the lack of coverage, next we use this dataset for training a classifier.
\subsubsection{Lack of coverage's effect}
After showing the lack of coverage in the COMPAS dataset, next we conduct an experiment to show its effect on accuracy of a prediction task.
Using the scikit-learn package (version 0.20) on Python, 
we trained a {\em decision tree} as the classifier, while using {\tt \small sex}, {\tt \small age}, {\tt \small race}, and {\tt \small marital status} as the observation attributes.
Using the attribute {\tt \small prior-count}, we created the binary label attribute that shows if a criminal has re-offended.
First, using the cross-validation, we observed that the trained model has acceptable accuracy and f1 measures of 0.76 and 0.7 over a random test set.
Relying on these numbers, a data scientist may consider using this model for predicting the behaviour of criminals.
However, in this experiment we show that these measures does not necessarily show the good performance for the minorities.
We focus on the minority class of Hispanic Females (HF), as there are only 100 of those in the dataset.
We chose this group, specifically because we were limited to the records in the dataset, and were not able to collect additional data points. Hispanic Females 
was (i) a minority subgroup small enough that removing its instances would not noticeably change the size of the training data, while (ii) there were ``enough'' tuples of this group in the dataset (100 tuples) that we could show case the impact of additional data collection.
We considered a randomized set of 20 (out of the 100) HF as the test set for studying the prediction over this group.
Since we not only wanted to study the effect of the lack of coverage, but also the coverage enhancement, we created 5 training sets (using the remaining 80 HF criminals), containing $\{ 0, 20, 40, 60, 80\}$ HF plus all other records not in this demographic.
We used these datasets for training the classifier and calculated its accuracy and f1 measure for predicting our test set of 20 HF.
Figure~\ref{fig:cla} shows the results. The x-axis shows the datasets, while the left and right y-axes show the accuracy and f1 measure for each setting.
Collecting the additional data points should result in more accurate for the under-represented groups, as those are to provide a better understanding of those, while not having a major impact in the overall accuracy. This is confirmed in the figure, as the overall accuracy remained on 76\% in all settings. We also observed that overall f1-measure did not change from 0.7.
First, one can see that the dataset that does not have any HF, has an unacceptable performance for this class, as its accuracy is less than 50\%.
The next observation is that the accuracy and f1 measures improve as the lack of coverage is resolved by adding more HF to the training data. The reduction in the slope of the accuracy curve around 40 suggests that it can be a good choice for the coverage threshold. Interestingly, this is aligned with the central limit theorem's rule of thumb of 30.
In a similar experiment, 
we considered two other minority subgroups, (1. Female - Other Races (FO) and 2. Male - Other Races (MO)) for which there existed at least 20 records in the dataset that we could consider as the test data.
Removing the records of these demographics from the training data, the accuracy of the model was 39\% for FO and 59\% for MO.
The accuracy different between the two groups shows the higher similarity in the ``behaviour'' of MO to other records in the training data.

\begin{figure*}[t]
\vspace{-2mm}
\begin{minipage}{0.23\textwidth}
  	\centering
  	\vspace{6mm}
  	\begin{tikzpicture}
\begin{axis}[
width  = 1.05\linewidth,
height = 4cm,
xtick={0, 1, 2, 3, 4},
xticklabels={0,20,40,60,80},
ylabel = {Accuracy},
ymin = 0.45, ymax=0.8,
ylabel style={
	font=\scriptsize,
    anchor=west,
    at={(0.22,0.25)},
},
ytick pos=left,
xtick align=inside,
xticklabel style={font=\scriptsize},
yticklabel style={font=\scriptsize},
xlabel = {dataset index},
xlabel style={
	font=\scriptsize
},
legend style={
legend columns=1,
at={(1.0,1.0)},
anchor=south east,
column sep=0ex,
draw=none,
font=\scriptsize,
fill=none
}
]
\addplot [mark=square, mark options={fill=white, scale=0.8}, color=rred] table [x=Dataset_index, y=accuracy_overall, col sep=comma] {data/Exp_result.csv};
\addplot [mark=*, mark options={fill=white, scale=0.8}, color=bblue] table [x=Dataset_index, y=accuracy, col sep=comma] {data/Exp_result.csv};
\legend{\revision{Overall Accuracy}, Subgroup Accuracy}
\end{axis}

\begin{axis}[
ybar,
axis y line*=right,
hide x axis,
width  = 1.05\linewidth,
height = 4cm,
hide x axis,
axis y line*=right,
 bar width=0.2cm,
ylabel={f1-measure},
ylabel near ticks,
ylabel style={
	font=\scriptsize,
    at={(1.15,0.5)}
},
yticklabel style={font=\scriptsize},
ytick pos=right
]
\addplot [draw=none,fill = ppurple,fill opacity=0.2] table [x=Dataset_index, y=f1-measure, col sep=comma] {data/Exp_result.csv};
\end{axis}

\end{tikzpicture}
	\vspace{-2mm}\caption{\footnotesize \revision{The effect of lack of coverage on classification: accuracy, f1 measure}}\label{fig:cla}
\end{minipage}
\hfill
\begin{minipage}{0.23\textwidth}
  	\centering
\begin{tikzpicture}
\begin{axis}[
width  = 1.05\linewidth,
height = 4cm,
xmode=log,
xtick={0.000001, 0.00001,  0.0001, 0.001, 0.01},
ymax = 110,
ymin = 0,
ylabel = {Runtime (s)},
ylabel style={
	font=\scriptsize,
    anchor=west,
    at={(0.22,0.25)},
},
xlabel style={
	font=\scriptsize
},
xtick align=inside,
ytick pos=left,
xticklabel style={font=\scriptsize},
yticklabel style={font=\scriptsize},
xlabel = {Threshold rates},
legend style={
legend columns=1,
at={(1.0,1.0)},
anchor=south east,
column sep=0ex,
draw=none,
font=\scriptsize,
fill=none
}
]

\addplot [mark=x, mark options={fill=white, scale=0.8}, color=gray] table [x=threshold, y=apriori, col sep=comma] {data/thresholdTestAirbnb.csv};
\addplot [mark=*, mark options={fill=white, scale=0.8}, color=bblue] table [x=threshold, y=PatternBreakerOriginal, col sep=comma] {data/thresholdTestAirbnb.csv};
\addplot [mark=triangle*, mark options={fill=white, scale=0.8}, color=ggreen] table [x=threshold, y=PatternCombiner, col sep=comma] {data/thresholdTestAirbnb.csv};
\addplot [mark=square*, mark options={fill=white, scale=0.8}, color=rred] table [x=threshold, y=hybrid, col sep=comma] {data/thresholdTestAirbnb.csv};

\legend{\scriptsize \revision{\apriori},\patternbreaker, \patterncombiner, \deepdiver}
\end{axis}

\begin{axis}[
ybar,
axis y line*=right,
hide x axis,
width  = 1.05\linewidth,
height = 4cm,
xmode=log,
ymin = 0,
hide x axis,
axis y line*=right,
 bar width=0.2cm,
ylabel={\# of MUPs},
ylabel near ticks,
ylabel style={
	font=\scriptsize,
    at={(1.05,0.5)}
},
yticklabel style={font=\scriptsize},
ytick pos=right
]

\addplot [draw=none,fill = ppurple,fill opacity=0.2]  table [x=threshold, y=mups, col sep=comma] {data/thresholdTestAirbnb.csv};

\end{axis}

\end{tikzpicture}
\caption{\footnotesize 
\revision{AirBnB: MUP identicication, varying threshold \\($n$ = 1M, $d$ = 15)}
}\label{fig:varyThreshold}  	
\end{minipage}
\hfill
\begin{minipage}{0.23\textwidth}
	\centering
	\begin{tikzpicture}

\begin{axis}[
width  = 1.05\linewidth,
height = 4cm,
xmode=log,
xtick={0.00001,  0.0001, 0.001, 0.01},
ymin = 0,
ylabel = {Runtime (s)},
ylabel style={
	font=\scriptsize,
    anchor=west,
    at={(0.22,0.25)},
},
ytick pos=left,
xtick align=inside,
xticklabel style={font=\scriptsize},
yticklabel style={font=\scriptsize},
xlabel = {Threshold rates},
xlabel style={
	font=\scriptsize
},
legend style={
legend columns=1,
at={(1.0,1.0)},
anchor=south east,
column sep=0ex,
draw=none,
font=\scriptsize,
fill=none
}
]

\addplot [mark=*, mark options={fill=white, scale=0.8}, color=bblue] table [x=threshold, y=PatternBreakerOriginal, col sep=comma] {data/thresholdTestBlueNile.csv};
\addplot [mark=triangle*, mark options={fill=white, scale=0.8}, color=ggreen] table [x=threshold, y=PatternCombiner, col sep=comma] {data/thresholdTestBlueNile.csv};
\addplot [mark=square*, mark options={fill=white, scale=0.8}, color=rred] table [x=threshold, y=hybrid, col sep=comma] {data/thresholdTestBlueNile.csv};

\legend{\patternbreaker, \patterncombiner, \deepdiver}
\end{axis}

\begin{axis}[
ybar,
axis y line*=right,
hide x axis,
width  = 1.05\linewidth,
height = 4cm,
xmode=log,
ymin = 0,
hide x axis,
axis y line*=right,
 bar width=0.2cm,
ylabel={\# of MUPs},
ylabel near ticks,
ylabel style={
	font=\scriptsize,
    at={(1.15,0.5)}
},
yticklabel style={font=\scriptsize},
ytick pos=right
]

\addplot [draw=none,fill = ppurple,fill opacity=0.2]  table [x=threshold, y=mups, col sep=comma] {data/blueNileThresholdMupNum.csv};

\end{axis}

\end{tikzpicture}
\vspace{-6mm}\caption{\footnotesize BlueNile: MUP identicication, varying threshold ($n$ = 116,300, $d$ = 7)}\label{fig:varyThresholdBlueNile}
\end{minipage}
\hfill
\begin{minipage}{0.23\textwidth}
	\centering
	\begin{tikzpicture}

\begin{axis}[
width  = 1.05\linewidth,
height = 4cm,
xmode=log,
ytick={1,10,100},
ymin = 0.9,
ymax = 200,
ymode=log,
ylabel = {Runtime (s)},
xtick align=inside,
xlabel = {Number of data records},
ylabel style={
	font=\scriptsize,
    anchor=west,
    at={(0.22,0.25)},
},
xticklabel style={font=\scriptsize},
yticklabel style={font=\scriptsize},
ytick pos=left,
xlabel style={
	font=\scriptsize
},
legend style={
legend columns=1,
at={(1.0,1.0)},
anchor=south east,
column sep=0ex,
draw=none,
font=\scriptsize,
fill=none
}
]

\addplot [mark=*, mark options={fill=white, scale=0.8}, color=bblue] table [x=n, y=PatternBreakerOriginal, col sep=comma] {data/sizeTestAirbnb.csv};
\addplot [mark=triangle*, mark options={fill=white, scale=0.8}, color=ggreen]  table [x=n, y=PatternCombiner, col sep=comma] {data/sizeTestAirbnb.csv};
\addplot [mark=square*, mark options={fill=white, scale=0.8}, color=rred]  table [x=n, y=hybrid, col sep=comma] {data/sizeTestAirbnb.csv};

\legend{\patternbreaker, \patterncombiner, \deepdiver}

\end{axis}

\begin{axis}[
ybar,
axis y line*=right,
hide x axis,
width  = 1.05\linewidth,
height = 4cm,
xmode=log,
ymin = 0,
hide x axis,
axis y line*=right,
 bar width=0.2cm,
ylabel={\# of MUPs},
ylabel near ticks,
ylabel style={
	font=\scriptsize,
    at={(1.15,0.5)}
},
yticklabel style={font=\scriptsize},
ytick pos=right
]

\addplot [draw=none,fill = ppurple,fill opacity=0.2]  table [x=n, y=mups, col sep=comma] {data/sizeTestAirbnb.csv};

\end{axis}

\end{tikzpicture}
\caption{\footnotesize AirBnB: MUP identicication, varying data size \\($\tau $ = 0.1\%, $d$ = 15)}\label{fig:varySize}
\end{minipage}
\vspace{-2mm}
\end{figure*}
\begin{figure*}[t]
\vspace{-4mm}
\begin{minipage}{0.23\textwidth}
  	\centering
  	\begin{tikzpicture}

\begin{axis}[
width  = 1.05\linewidth,
height = 4cm,
ymode=log,
 xtick={5, 7, 9, 11, 13, 15, 17, 19},
xmax= 18,
ylabel = {Runtime (s)},
ytick align=inside,  
xtick align=inside,
xlabel = {Dimensions},
ytick pos=left,
ylabel style={
	font=\scriptsize,
    anchor=west,
    at={(0.15,0.25)},
},
xticklabel style={font=\scriptsize},
yticklabel style={font=\scriptsize},
xlabel style={
	font=\scriptsize
},
legend style={
legend columns=1,
at={(1.0,1.0)},
anchor=south east,
column sep=0ex,
draw=none,
font=\scriptsize,
fill=none
}
]

\addplot [mark=*, mark options={fill=white, scale=0.5}, color=bblue] table [x=dimension, y=PatternBreakerOriginal, col sep=comma] {data/dimensionTestAirbnb.csv};
\addplot [mark=triangle*, mark options={fill=white, scale=0.5}, color=ggreen]  table [x=dimension, y=PatternCombiner, col sep=comma] {data/dimensionTestAirbnb.csv};
\addplot [mark=square*, mark options={fill=white, scale=0.5}, color=rred]  table [x=dimension, y=hybrid, col sep=comma] {data/dimensionTestAirbnb.csv};

\legend{\patternbreaker, \patterncombiner, \deepdiver}

\end{axis}

\begin{axis}[
ybar,
axis y line*=right,
hide x axis,
ymode=log,
xmax = 18,
width  = 1.05\linewidth,
height = 4cm,
ymin = 0,
hide x axis,
axis y line*=right,
 bar width=0.1cm,
ylabel={\# of MUPs},
ylabel near ticks,
ylabel style={
	font=\scriptsize,
    at={(1.17,0.5)}
},
yticklabel style={font=\scriptsize},
ytick pos=right
]

\addplot [draw=none,fill = ppurple,fill opacity=0.2]  table [x=dimension, y=mups, col sep=comma] {data/dimensionTestAirbnb.csv};
\end{axis}
\end{tikzpicture}
\caption{\footnotesize AirBnB: MUP identicication, varying dimension ($n$ = 1M, $\tau$ =  0.1\%)}\label{fig:varyDimension}
\end{minipage}
\hfill
\begin{minipage}{0.23\textwidth}
  	\centering
    \vspace{3mm}
    \begin{tikzpicture}\hspace{-3mm}
		\begin{axis}[
		width  = 1.05\linewidth,
		height = 4cm,
		ymode=log,
		ymin = 0,
		ylabel = {Runtime (s)},
		ylabel style={
	    font=\scriptsize,
        anchor=west,
        at={(0.15,0.25)},
        },
		xlabel style={
		font=\scriptsize
		},
		ytick align=inside,  
		xtick align=inside,
		xticklabel style={font=\scriptsize},
        yticklabel style={font=\scriptsize},
		xlabel = {Dimensions},
		legend style={
		legend columns=2,
		anchor=south east,
		legend style={column sep=1mm},
		draw=none,
		font=\scriptsize,
		fill=none
		}
		]
		\addplot [mark=*, mark options={fill=white, scale=0.8}, color=bblue] table [x=dimension, y=level8, col sep=comma] {data/dimensionLevelLimitAbsolute.csv};
		\addplot [mark=diamond*, mark options={fill=white, scale=0.8}, color=ggreen] table [x=dimension, y=level6, col sep=comma] {data/dimensionLevelLimitAbsolute.csv};
		\addplot [mark=triangle*, mark options={fill=white, scale=0.8}, color=rred] table [x=dimension, y=level4, col sep=comma] {data/dimensionLevelLimitAbsolute.csv};
		\addplot [mark=square*, mark options={fill=white, scale=0.8}, color=ppurple] table [x=dimension, y=level2, col sep=comma] {data/dimensionLevelLimitAbsolute.csv};
		\legend{max $\ell$ = 8, max $\ell$ = 6, max $\ell$ = 4, max $\ell$ = 2}
		\end{axis}
	\end{tikzpicture}
	\vspace{-6mm}\caption{\footnotesize MUPs identification with various dimensions using \deepdiver\ (AirBnB, $n$ = 1M, $\tau$ = 0.1\%)}\label{fig:mupsLevel}
\end{minipage}
\hfill
\begin{minipage}{0.23\textwidth}
	\centering
    \vspace{-2mm}
    \begin{tikzpicture}
		\begin{axis}[
		width  = 1.05\linewidth,
		height = 4cm,
		xmode=log,
		ymode=log,
		ymin = 0,
		ylabel = {Runtime (s)},
		xlabel style={
			font=\scriptsize
		},
		ylabel style={
	    font=\scriptsize,
        anchor=west,
        at={(0.15,0.25)},
        },
        xticklabel style={font=\scriptsize},
        yticklabel style={font=\scriptsize},
		ytick align=inside,  
		xtick align=inside,
		xlabel = {Threshold rates},
		legend style={
			legend columns=2,
			at={(1.2,1.0)},
			anchor=south east,
			column sep=0ex,
			draw=none,
			font=\scriptsize,
			fill=none
		}
		]
	\addplot [mark=triangle*, mark options={fill=white, scale=0.8}, color=blue] table [x=threshold, y=naive, col sep=comma] {data/dcThreshold.csv};
	\addplot [mark=square*, mark options={fill=white, scale=0.8}, color=ppurple] table [x=threshold, y=greedy6, col sep=comma] {data/dcThreshold.csv};
	\addplot [mark=diamond*, mark options={fill=white, scale=0.8}, color=ggreen] table [x=threshold, y=greedy5, col sep=comma] {data/dcThreshold.csv};
	\addplot [mark=*, mark options={fill=white, scale=0.8}, color=rred] table [x=threshold, y=greedy4, col sep=comma] {data/dcThreshold.csv};
	\addplot [mark=otimes*, mark options={fill=white, scale=0.8}, color=bblue] table [x=threshold, y=greedy3, col sep=comma] {data/dcThreshold.csv};
	\legend{Naive ($\ell$=3), Greedy ($\ell$=6), Greedy ($\ell$=5), Greedy ($\ell$=4), Greedy ($\ell$=3)}
	\end{axis}
	\end{tikzpicture}
	\vspace{-6mm}\caption{\footnotesize Coverage Enhancement with various thresholds (AirBnB, $n$ = 1M, $d$ = 13)}\label{fig:dcThreshold1}
\end{minipage}
\hfill
\begin{minipage}{0.23\textwidth}
	\centering
    \vspace{3mm}
	\begin{tikzpicture}
		\begin{axis}[
		width  = 1.05\linewidth,
		height = 4cm,
		ymode=log,
		ymin = 0,
		ylabel = {Runtime (s)},
		ylabel style={
	    font=\scriptsize,
        anchor=west,
        at={(0.15,0.25)},
        },
		xlabel style={
			font=\scriptsize
		},
		ytick align=inside,  
		xtick align=inside,
		xlabel = {Dimensions},
		xticklabel style={font=\scriptsize},
        yticklabel style={font=\scriptsize},
		legend style={
			legend columns=2,
			at={(1.,1.0)},
			anchor=south east,
			legend style={column sep=0.2cm},
			draw=none,
			font=\scriptsize,
			fill=none
		}
		]
	\addplot [mark=*, mark options={fill=white, scale=0.8}, color=bblue] table [x=dimension, y=8, col sep=comma] {data/dataCollectionDimension.csv};
	\addplot [mark=diamond*, mark options={fill=white, scale=0.8}, color=rred] table [x=dimension, y=6, col sep=comma] {data/dataCollectionDimension.csv};
	\addplot [mark=triangle*, mark options={fill=white, scale=0.8}, color=ggreen] table [x=dimension, y=4, col sep=comma] {data/dataCollectionDimension.csv};
	\addplot [mark=square*, mark options={fill=white, scale=0.8}, color=ppurple] table [x=dimension, y=2, col sep=comma] {data/dataCollectionDimension.csv};
	\legend{$\ell$ = 6, $\ell$ = 5, $\ell$ = 4, $\ell$ = 3}
	\end{axis}
	\end{tikzpicture}
	\vspace{-2mm}\caption{\footnotesize Coverage Enhancement with various dimensions using Greedy (AirBnB, $n$ = 1M, $\tau$ = 0.1\%)}\label{fig:dcThreshold}
\end{minipage}\vspace{-8mm}
\end{figure*}
\begin{figure}[t]
  \centering
  
\begin{tikzpicture}

\begin{axis}[
ybar,
width  = \linewidth,
height = 3.5cm,
ymode=log,
 bar width=0.08cm,
xtick={5,10,15,20,25,30,35},
ymin = 0,
ymax = 120000,
ylabel = {input/output size},
ylabel style={
	font=\scriptsize
},
xlabel style={
	font=\scriptsize
},
ytick align=inside,  
xtick align=inside,
xlabel = {Dimensions},
legend style={
legend columns=4,
at={(1.0,1.0)},
anchor=south east,
column sep=0ex,
draw=none,
font=\scriptsize,
fill=none
}
]


\addplot [fill=red!100!black] table [x=dimension, y=input8, col sep=comma] {data/dataCollectionDimension.csv};
\addplot [fill=blue!100!black] table [x=dimension, y=output8, col sep=comma] {data/dataCollectionDimension.csv};
\addplot [fill=red!80] table [x=dimension, y=input6, col sep=comma] {data/dataCollectionDimension.csv};
\addplot [fill=blue!80] table [x=dimension, y=output6, col sep=comma] {data/dataCollectionDimension.csv};
\addplot [fill=red!50] table [x=dimension, y=input4, col sep=comma] {data/dataCollectionDimension.csv};
\addplot [fill=blue!50] table [x=dimension, y=output4, col sep=comma] {data/dataCollectionDimension.csv};
\addplot [fill=red!20] table [x=dimension, y=input2, col sep=comma] {data/dataCollectionDimension.csv};
\addplot [fill=blue!20] table [x=dimension, y=output2, col sep=comma] {data/dataCollectionDimension.csv};

\legend{input ($\ell$=6), output ($\ell$=6), input ($\ell$=5), output ($\ell$=5),input ($\ell$=4), output ($\ell$=4),input ($\ell$=3), output ($\ell$=3)}
\end{axis}
\end{tikzpicture}
\vspace{-4mm}\caption{\footnotesize Coverage Enhancement with various dimensions using Greedy (AirBnB, $n$ = 1M, $\tau$ = 0.1\%) -- input/output sizes}\label{fig:dcThresholdio}
\vspace{-9mm}
\end{figure}

\revision{
\subsubsection{Coverage Enhancement Quality}
In previous experiment, we showed the quality of coverage enhancement in the sense that it increases the model performance for the under-represented groups, while not impacting the overall performance of the model.
In this experiment, we show
the role of human-in-the-loop by setting up the validation oracle and identifying the MUPs to be covered.
Enforcing the rules of validation oracle while 
expanding the tree data structure used 
by the coverage enhancement algorithm \greedy, the semantic appropriateness (validity) of the output of the coverage enhancement algorithm is guaranteed.
We consider the MUPs discovered in \S~\ref{subsec:exp-validation-1} while targeting to satisfy the coverage level of 2.
In the validation oracle, we rule out (a) the combinations with marital status being unknown and (b) the age group below 20 being not single.
Coverage enhancement suggests to collect \{over 60, other races, widowed\}, \{between 20 and 40, Hispanic, widowed\}, \{over 60, significant other\}, \{other races, divorced\}, and \{other races, widowed\}.
}

\subsection{Performance Evaluation}\label{subsec:expperformance}
We evaluate the performance of (i) the three MUP identification algorithms \patternbreaker, \patterncombiner, and \deepdiver, as well as (ii) the coverage enhancement algorithm.
The Na\"{i}ve algorithm for MUP identification (\S~\ref{subsec:mup-naive}) did not finish for any of the settings within the time limit. Therefore, we did not include it in the results.
For the coverage enhancement problem, we compare the \greedy algorithm (\S~\ref{subsec:HS-efficient}) with the direct implementation of the hitting set's approximation algorithm (\naive).
We use our largest dataset, i.e. AirBnB, as the default and test the algorithms' performances under various coverage threshold ($\tau$) on both AirBnB and BlueNile. We varied the number of attributes ($d$) and the size of the dataset ($n$) on our largest dataset, that is AirBnB.
In addition to the proposed algorithms, for MUP identification, we also consider comparing with \apriori, the following adaptation of apriori algorithm~\cite{apriori}:
we consider each $\langle$attribute,value$\rangle$ as an item and find the frequent item-sets. For each such item-set we find its parents (the item-sets that include the item-set and one more item). For each such parent, if all of its children are frequent, we find the corresponding pattern and add it to the set of MUPs. We understand that
because the items are considered independently in the frequent item-set mining, not all item-set represent a valid pattern. For instance, consider the items $I_1=\langle A_1, 0\rangle$ $I_2=\langle A_1, 1\rangle$. Then the item-set $\{I_1, I_2\}$ does not represent a valid pattern.
Furthermore, this algorithm considers a much larger search space (lattice) to explore, compared to the pattern graph our algorithms explore.
For instance, consider a case where there are 10 attributes, each with cardinality 5. 
The size of the pattern graph is $(5+1)^{10}$, around 60 million nodes, whereas after considering each attribute-value as an item, the size of the defined lattice is $2^{5\times 10}$, around $10^{15}$.

\subsubsection{MUP identification - varying threshold}
For AirBnB, we varied the coverage threshold from 0.0001\% (most patterns are covered) to 1\% (most patterns are uncovered). The dataset size was set to one million, and the number of attributes was set to 15.
For BlueNile, we had 7 attributes and 116,300 records.  We varied coverage from 0.001\% (threshold $=$ 1) to 1\%.

\stitle{Results.}
The runtimes 
are shown in Figure~\ref{fig:varyThreshold} (AirBnb) and Figure~\ref{fig:varyThresholdBlueNile} (BlueNile). The x-axis denotes the different threshold rate values. The left-y-axis is the runtime in seconds, while the right axis and the bars show the output size (number of MUPs).
\revision{
In addition, Figure~\ref{fig:varyThreshold} also contains the results for \apriori, the adaptation of the apriori algorithm for discovering the MUPs.
As explained in \S~\ref{subsec:expperformance}, this algorithm suffers from multiple facts that makes it unsuitable for MUP discovery. First, the lattice data structure it has to explore can be extensively larger than the pattern graph. Second, it needs to generate the parents of the frequent item-sets to find the infrequent item-sets that all of their children are frequent. Finally, not all the discovered item-sets represent valid MUPs.
This is confirmed in this experiment where it only finished for one settings in less than 100 seconds. For instance for threshold of $0.001\%$ it took 516 sec. to finish. As expected, we observed the same behaviour in other experiments as well. 
Hence, in the rest of experiments, we only focus on evaluating the algorithms we proposed in this paper.
}
When the threshold increases, 
larger regions in the space become uncovered and more general MUPs with smaller levels appear in the results.
This is the reason for the drop in the runtime of \patternbreaker\ in Figure~\ref{fig:varyThreshold} and Figure~\ref{fig:varyThresholdBlueNile}.
Recall that \patternbreaker\ is a top-down search algorithm, and generally returns faster when the MUPs are higher in the pattern graph (having small levels).
In contrast,  \patterncombiner's runtime increases as the \patterncombiner\ is a bottom-up search algorithm and, hence, terminates faster when the MUPs are low in the pattern graph (when the space is mostly covered) as shown in Figure~\ref{fig:varyThreshold} and Figure~\ref{fig:varyThresholdBlueNile}.
In tests with AirBnB, these two algorithms have similar speeds when the threshold is around 0.01\%, in which case, most MUPs appear in the middle of the graph. Meanwhile, Figure~\ref{fig:varyThreshold} also shows that \deepdiver\ is as fast, if not faster, as the other two algorithms in all situations. This suggests that the efficiency of \deepdiver\ is more robust to the actual data coverage status. As for BlueNile, Figure~\ref{fig:varyThresholdBlueNile} also suggests \deepdiver\  is the best in all cases, whereas \patterncombiner\ is always slower.
Still the gap between \revision{\patterncombiner\ with} the two other algorithms is larger. The high cardinality of the attributes in BN is the key to this behavior.
In this situation, the width of the pattern graph quickly increases. 
The lowest level (level $7$) of the pattern graph in this case has more than 100K nodes, whereas for 7 binary attributes, it is 128.
Therefore, due to the significant width of the graph in the bottom-level, \patterncombiner\ (the bottom-up algorithm) loses its efficiency.

\subsubsection{MUP identification - varying data size}
Setting the number of attributes to 15 and threshold to 1\%, we evaluated the three MUP identification algorithms on data samples of various sizes from 10K to 1M and measured the runtime. 

\stitle{Results.}
Figure~\ref{fig:varySize} shows the runtime plots.
The x-axis denotes the size of test dataset; the left-y-axis denotes the runtime in seconds and the right-y-axis (and the bars) show the number of MUPs. 
All three algorithms had running time only slightly impacted by data set size, taking less than 100 seconds in all settings.  The effort is driven more by the number of patterns, which is independent of data set size.
The \patterncombiner\ algorithm checks the actual dataset only for the bottom layer of the pattern graph and 
so the data set size has no effect on most of its computation.
\patternbreaker\ and \deepdiver\ need to check the data for computing the coverage of the intermediate nodes, so data set size does matter.  However, the use of inverted indices limits the impact.

\subsubsection{MUP identification - varying data dimensions}
Similarly, we evaluate the scalability of the proposed algorithm as the number of attributes ($d$) increases.
With a dataset size of one million records and the threshold set at 1\%, we measured the overall runtime of all three algorithms with the dataset projected down to between 5 and 17 dimensions.

\stitle{Results.}
In Figure~\ref{fig:varyDimension}, the x-axis denotes the number of attributes, while the left-y-axis and right-y-axis (the bars) denote the runtime in seconds and the output size, respectively.
The size of the pattern graph increases exponentially with the number of attributes. 
The number of MUPs and the algorithm running times also increase exponentially.
Still, all algorithms managed to finish in a reasonable time (under two minutes) for up to 17 attributes.


As the number of attributes increases, the number of MUPs increases exponentially, but those become the combination of more attributes.
While the MUPs with fewer are harmful and important to discover, the MUPs with more attributes are too specific, and hence, less interesting. For example, while lack of coverage for Hispanic males in a dataset is an important fact to discover, not having enough married Hispanic males under the age of 20 is less harmful.
Limiting the exploration level to a certain number, allows the MUP identification algorithms to scale for datasets with tens of attributes and still finding the risky MUPs.
We evaluated this by limiting the MUP discovery level in Figure~\ref{fig:mupsLevel} while using \deepdiver\ for the identification.
As observed in the figure, the algorithm was able to quickly find MUPs of up to level 2 (the MUPs that are the combinations of one or two attributes) for even 35 attributes in around 10 sec.

\subsubsection{Coverage enhancement - varying threshold}
Recall that the objective is to identify the minimum additional data to collect, such that after the data collection the maximum coverage level is not less than $\lambda$, i.e. there are no uncovered patterns on or above a given level $\lambda$.
Setting the number of items to 1M in the AirBnB dataset and number of attributes to 13, we vary the threshold rate from $10^{-6}$ to $0.01$ while choosing different maximum coverage levels from $3$ to $6$.

\stitle{Results.}
Figure~\ref{fig:dcThreshold1} represents the experiment results.
The x-axis shows threshold and the y-axis provides the runtime in seconds.
First, the single blue triangular tick mark in the top-left of the plot shows the only setting for which the \naive\ algorithm finished within the time limit.
\greedy, on the other hand, finished in a few seconds for all settings.
The next observation is that, as expected, the runtime of the \greedy\ algorithm increases by the level; that is because it needs to collect more data points to ensure that there is no uncovered pattern on or above level $\lambda$, i.e., $\forall P\in\mathcal{M}:~\ell(P)\geq \lambda$.
Also, as the threshold rate increases the MUPs move to the top of the pattern graph. Therefore, more regions in the space become uncovered and more data points are required to guarantee the given maximum coverage level. As a result, the algorithm's runtime increases by the threshold.

\subsubsection{Coverage enhancement - varying data dimensions}
Lastly, we study the effect of the number of attributes on the performance of \greedy, as well as input and output sizes.
Using the AirBnB dataset, while setting the number of items to 1M and the threshold to \%1, we vary the number of attributes from $5$ to $35$, and the max. coverage level from $3$ to $6$.

\stitle{Results.}
Figure~\ref{fig:dcThreshold} shows the runtime of the algorithm, while Figure~\ref{fig:dcThresholdio} provides information about the input and output sizes. Here, by the input size, we refer to the number of uncovered patterns (to cover) at the given level $\lambda$ while the output size is the number of additional data points to collect.
First, as explained above, increasing the maximum coverage level increases the runtime of the algorithm, as the output size increases. This is also reflected in Figure~\ref{fig:dcThresholdio}, as for a fixed number of attributes, both the input and output size increase in orders of magnitude.
Similarly, increasing the number of attributes increases the size of the pattern graph exponentially, and also does the algorithm runtime (Figure~\ref{fig:dcThreshold}) and the output size (Figure~\ref{fig:dcThresholdio}).
Still, recall that lack of coverage for the patterns that are the combination of a few attribute values (having smaller levels) is more harmful than the ones in the form of the combination of several attribute values.
Looking at Figure~\ref{fig:dcThreshold}, while solving the coverage enhancement problem for larger levels takes more time, the algorithm has a reasonable performance for resolving the lack of coverage for smaller values of maximum coverage level.
Finally, in Figure~\ref{fig:dcThresholdio}, applying the greedy approximation algorithm, the output sizes are significantly smaller than the input sizes for each setting. That is because every value combination in the output hits multiple uncovered patterns in the input.

\section{Related Work}\label{sec:related}
Diversity, as a general term for capturing the quality of a collection of items on the variety of its constituent elements~\cite{diversity-jag}, is an important issue in a wide range of contexts, including social science~\cite{simpson1949measurement}, political science~\cite{surowiecki2005wisdom}, information retrieval~\cite{agrawal2009diversifying}, and big data environments and data ethics~\cite{barocas2016big, diversity-jag}.
Facility dispersion problems~\cite{facility1} tend to disperse a set of points such that the minimum or average distance between the pair of points is maximized.
Also, techniques such as determinantal point process (DPP) have been used for diverse sampling~\cite{dpp1,dpp2}.
A recent work~\cite{celis2016fair} considers diversity as the entropy over one discrete low-cardinality attribute. Our definition of coverage can be seen
as a generalization of this, defined over combinations of
multiple attributes.

The rich body of work on sampling, especially in the database community, aims to draw samples from a large database~\cite{DBsampling1,DBsampling2}.
Our goal in this paper is to ensure that a given dataset (often called as ``found data'') is appropriate for use in a data science task.
The dataset could be collected independently, through a process on which the data scientist have limited, or no, control.
This is different from sampling.  

Technically speaking, there are similarities between the algorithms provided in this paper and 
the classical powerset lattice and combinatorial set enumeration problems~\cite{setenum},
such as data cube modeling~\cite{harinarayan1996implementing}, frequent item-sets and association rule mining~\cite{apriori}, data profiling~\cite{heise2013scalable},\techrep{ recommendation systems~\cite{asudeh2017assisting},} and data cleaning~\cite{he2016interactive}.
\revision{
While such work, and the algorithms such as apriori, traverse over the powerset lattice, our problem is modeled as the traversal over the pattern graph which has a different structure (and properties) compared to a powerset lattice.
Hence, those techniques cannot be directly applied here.
We provided some rules for traversing the pattern graph that are inspired from the set enumeration tree~\cite{setenum}, one-to-all broadcast in a hypercube~\cite{bertsekas1991optimal}, and lattice traversal heuristic proposed in~\cite{heise2013scalable}.
}
In \S~\ref{sec:4}, we modeled the data collection problem as a hitting set instance (an equivalent of the set cover problem). Further details about this fundamental problem can be found in references such as~\cite{vazirani2013approximation,hochbaum1996approximation}.

\section{Final Remarks}\label{sec:conclusion}
In this paper, we studied lack of coverage as a risk to using a dataset for analysis.
Lack of coverage in the dataset may cause errors in outcomes, including algorithmic racism.
Defining the coverage over multiple categorical attributes, we developed techniques for identifying the spots not properly covered by data to help the dataset users; we also proposed techniques to help the dataset owners resolve the coverage issues by additional data collection.
Comprehensive experiments over real datasets demonstrated the validity of our proposal.

\revision{
Following ideas such as~\cite{sudman1976applied},
in MUP identification problem, we considered a fixed threshold across different value combinations, representing ``minor subgroups''.
We consider further investigations on identifying threshold value and minor subgroups, as well as other alternatives for future work.
}

\section{Acknowledgements}
\noindent
This work was supported by NSF Grant No. 1741022.
We are grateful to the University of Toronto, Department of Computer Science, and Dr. Nick Koudas for the AirBnB dataset.

\bibliographystyle{unsrt}
\bibliography{ref}

\begin{thebibliography}{10}

\bibitem{google-gorilla}
M.~Mulshine.
\newblock A major flaw in google's algorithm allegedly tagged two black
  people's faces with the word 'gorillas'.
\newblock Business Insider, 2015.

\bibitem{closed-eyes}
Adam Rose.
\newblock Are face-detection cameras racist?
\newblock Time Business, 2010.

\bibitem{hp1}
Mallory Simon.
\newblock {HP} looking into claim webcams can't see black people.
\newblock CNN, 2009.

\bibitem{hp2}
Tess Townsend.
\newblock Most engineers are white and so are the faces they use to train
  software.
\newblock Recode, 2017.

\bibitem{propublica}
Julia Angwin, Jeff Larson, Surya Mattu, and Lauren Kirchner.
\newblock Machine bias: Risk assessments in criminal sentencing.
\newblock ProPublica, 5/23/2016.

\bibitem{google-gorilla-resolution}
Alex Hern.
\newblock Google's solution to accidental algorithmic racism: ban gorillas.
\newblock The Guardian, 2018.

\bibitem{chen2018my}
Irene Chen, Fredrik~D Johansson, and David Sontag.
\newblock Why is my classifier discriminatory?
\newblock In {\em NeurIPS}, 2018.

\bibitem{diversity-jag}
Marina Drosou, HV~Jagadish, Evaggelia Pitoura, and Julia Stoyanovich.
\newblock Diversity in big data: A review.
\newblock {\em Big data}, 5(2), 2017.

\bibitem{biggio2013evasion}
Battista Biggio, Igino Corona, Davide Maiorca, Blaine Nelson, Nedim
  {\v{S}}rndi{\'c}, Pavel Laskov, Giorgio Giacinto, and Fabio Roli.
\newblock Evasion attacks against machine learning at test time.
\newblock In {\em ECML PKDD}, 2013.

\bibitem{yang2018nutritional}
Ke~Yang, Julia Stoyanovich, Abolfazl Asudeh, Bill Howe, HV~Jagadish, and Gerome
  Miklau.
\newblock A nutritional label for rankings.
\newblock In {\em SIGMOD}, 2018.

\bibitem{setenum}
Ron Rymon.
\newblock Search through systematic set enumeration.
\newblock Technical report, University of Pennsylvania, 1992.

\bibitem{apriori}
Rakesh Agrawal and Ramakrishnan Srikant.
\newblock Fast algorithms for mining association rules.
\newblock In {\em VLDB}, 1994.

\bibitem{vazirani2013approximation}
Vijay~V Vazirani.
\newblock {\em Approximation algorithms}.
\newblock Springer Science \& Business Media, 2013.

\bibitem{sudman1976applied}
Seymour Sudman.
\newblock Applied sampling.
\newblock {\em Academic Press New York}, 1976.

\bibitem{invertedindex1}
Doug Cutting and Jan Pedersen.
\newblock Optimization for dynamic inverted index maintenance.
\newblock In {\em SIGIR}, 1989.

\bibitem{simpson1949measurement}
Edward~H Simpson.
\newblock Measurement of diversity.
\newblock {\em Nature}, 163(4148), 1949.

\bibitem{surowiecki2005wisdom}
James Surowiecki.
\newblock {\em The wisdom of crowds}.
\newblock Anchor, 2005.

\bibitem{agrawal2009diversifying}
Rakesh Agrawal, Sreenivas Gollapudi, Alan Halverson, and Samuel Ieong.
\newblock Diversifying search results.
\newblock In {\em WSDM}, pages 5--14. ACM, 2009.

\bibitem{barocas2016big}
Solon Barocas and Andrew~D Selbst.
\newblock Big data's disparate impact.
\newblock {\em Cal. L. Rev.}, 104:671, 2016.

\bibitem{facility1}
SS~Ravi, Daniel~J Rosenkrantz, and Giri~Kumar Tayi.
\newblock Facility dispersion problems: Heuristics and special cases.
\newblock In {\em WADS}. Springer, 1991.

\bibitem{dpp1}
Alex Kulesza, Ben Taskar, et~al.
\newblock Determinantal point processes for machine learning.
\newblock {\em Foundations and Trends in ML}, 5(2--3), 2012.

\bibitem{dpp2}
N.~Anari, Sh.~O. Gharan, and A.~Rezaei.
\newblock Monte carlo markov chain algorithms for sampling strongly rayleigh
  distributions and determinantal point processes.
\newblock In {\em COLT}, pages 103--115, 2016.

\bibitem{celis2016fair}
L~Elisa Celis, Amit Deshpande, Tarun Kathuria, and Nisheeth~K Vishnoi.
\newblock How to be fair and diverse?
\newblock {\em CoRR}, abs/1610.07183, 2016.

\bibitem{DBsampling1}
Frank Olken and Doron Rotem.
\newblock Random sampling from databases: a survey.
\newblock {\em Statistics and Computing}, 5(1):25--42, 1995.

\bibitem{DBsampling2}
Graham Cormode, Minos Garofalakis, Peter~J Haas, Chris Jermaine, et~al.
\newblock Synopses for massive data: Samples, histograms, wavelets, sketches.
\newblock {\em Foundations and Trends in Databases}, 4(1--3):1--294, 2011.

\bibitem{harinarayan1996implementing}
Venky Harinarayan, Anand Rajaraman, and Jeffrey~D Ullman.
\newblock Implementing data cubes efficiently.
\newblock In {\em SIGMOD}, 1996.

\bibitem{heise2013scalable}
A.~Heise, J.A. Quian{\'e}-Ruiz, Z.~Abedjan, A.~Jentzsch, and F.~Naumann.
\newblock Scalable discovery of unique column combinations.
\newblock {\em PVLDB}, 2013.

\bibitem{asudeh2017assisting}
Abolfazl Asudeh, Azade Nazi, Nick Koudas, and Gautam Das.
\newblock Assisting service providers in peer-to-peer marketplaces: Maximizing
  gain over flexible attributes.
\newblock {\em CoRR}, abs/1705.03028, 2017.

\bibitem{he2016interactive}
J.~He, E.~Veltri, D.~Santoro, G.~Li, G.~Mecca, P.~Papotti, and N.~Tang.
\newblock Interactive and deterministic data cleaning.
\newblock In {\em SIGMOD}, 2016.

\bibitem{bertsekas1991optimal}
D.P. Bertsekas, C.~{\"O}zveren, G.D. Stamoulis, P.~Tseng, and J.N. Tsitsiklis.
\newblock Optimal communication algorithms for hypercubes.
\newblock {\em JPDC}, 11(4), 1991.

\bibitem{hochbaum1996approximation}
Dorit~S Hochbaum.
\newblock {\em Approximation algorithms for NP-hard problems}.
\newblock PWS Publishing Co., 1996.

\end{thebibliography}
 
\techrep{
\appendix
\subsection{Coverage Computation}\label{ap:inverted}
Computing the coverage of a pattern is a key operation for identifying the MUPs.
Therefore, in this subsection, we design the coverage oracle $cov$, that given a pattern $P$ (as well as the dataset $\mathcal{D}$), returns $cov(P)$.

The direct implementation of the oracle passes through the dataset once, and follows Definition~\ref{def:coverage}, literally.
Instead, we use inverted indices~\cite{invertedindex1}.
For every value $v_j$ for an attribute $A_i$, we consider a bit vector $I_{i,j}$. Also, we aggregate the items with the same value combinations.
Let $D$ be the set of unique value combinations in $\mathcal{D}$ while for each entry in $D$ we maintain the number of items in $\mathcal{D}$ with that value combination.
The $k$-th bit of a vector $I_{i,j}$ is $1$, if $D_k[i] = v_j$ (and is $0$ otherwise).
In order to compute the coverage of a pattern $P$, it applies the binary AND operations between the corresponding vectors for the deterministic elements in $P$ and takes the dot product of the result vector with the count vector that shows the number of items matching each value combination.
For instance, consider Example~\ref{ex:1}.
The following are the bit vectors and the count vector for this example:
\begin{small}
	\begin{center}
	\begin{tabular}{|@{}c@{}|@{}c@{}|@{}c@{}|@{}c@{}|@{}c@{}||@{}c@{}|@{}c@{}|@{}c@{}|@{}c@{}|@{}c@{}||@{}c@{}|@{}c@{}|@{}c@{}|@{}c@{}|@{}c@{}|}
		\hline
		vid & $000$&$001$&$010$&$011$&vid & $000$&$001$&$010$&$011$ &vid & $000$&$001$&$010$&$011$ \\ \hline\hline
        $v_{1,0}$&1&1&1&1 &$v_{2,0}$&1&1&0&0 & $v_{3,0}$&1&0&1&0\\ \hline
        $v_{1,1}$&0&0&0&0 &$v_{2,1}$&0&0&1&1 & $v_{3,1}$&0&1&0&1\\ \hline
        {\it cnt}    &1&2&1&1  \\ \cline{1-5}
	\end{tabular}
    \end{center}
\end{small}
Then the coverage of the pattern $0X1$, for example, is calculated by applying the binary AND operation on the vectors $v_{1,1}$ and $v_{3,0}$ and taking the dot product of the result with the {\it cnt} vector. As a result, $cov(0X1)=3$.

The total number of bit vectors are $cd$, as one vector is maintained for each attribute value. Since the length of each bit vector is in $O(n)$, the storage  requirement for the bit vectors is $O(cdn)$.

\subsection{Efficient Dominance Checking}\label{ap:dominancechecking}
Due to the large amount of node visits, efficient MUP dominance checking is critical.
A \naive\ design of this operation is a simple explorative search among all MUPs; this, however, can be expensive given the potentially large number of MUPs.

Instead, similar to Appendix~\ref{ap:inverted}, we use inverted indices.
We create an inverted index for each value of each attribute $A_i$. For each attribute we also consider an inverted index for the patterns that have non-deterministic elements on that attribute.
For each attribute value, we consider a bit vector of size of the current set of discovered MUPs.
When a new MUP is identified, a new bit of 1 is added to the bit vectors corresponding to the values of the elements of the new MUP;
the rest of bit vectors are appended by 0.

To check if a pattern $P$ dominates the current set of MUPs $\mathcal{M}$, we iterate through each value of pattern $P$ and skip the non-deterministic elements as the patterns dominated by $P$ can have any value on those.
Applying the binary AND operation between the bit vectors for the values of the deterministic elements identify if $P$ dominates $\mathcal{M}$: if there is a non-zero element in the result vector, there exists a pattern $P'$ that is dominated by $P$ and, therefore, $P$ dominates $\mathcal{M}$.

Similarly, to check if a pattern $P$ is dominated by the current MUPs $\mathcal{M}$, we parse each value of $P$ but collect a different set of bit vectors:
(i) for the non-deterministic elements ($X$ values) in $P$, we collect the corresponding bit vectors for the non-deterministic elements and (ii) for the deterministic elements in $P$, we collect the result of bitwise OR operation between the bit vector for this attribute value and the bit vector for value $X$ in this attribute.
In the end, we perform the binary AND operation over the collected vectors: if the result is all zero, $\mathcal{M}$ does not dominate $P$; otherwise, $\mathcal{M}$ dominates $P$.

In addition, since we are only interested to know if there is a 1 in the result of the AND operations, 
we apply an early stop strategy by conducting the operation word by word and terminating it as soon as a 1 is observed in the results.
\subsection{Uncovered patterns to hit for maximum coverage level assurance}\label{ap:ptc}
Here we discuss the set of patterns we need to hit (c.f. \S~\ref{sec:4}) in order to guarantee the maximum coverage level for a user-specified value $\lambda$.
In the first glance, applying the hitting set on the MUPs with level at most $\lambda$ will cover them all and, thus, assures the maximum coverage level of $\lambda$.
But, this is not correct.
To further explain it, consider again Example~\ref{ex:datacollection} and the collection of its MUPs $P_1$ to $P_7$.
One may notice that the value combinations 02011, 02111, and 10201 (discovered by the greedy hitting set for $P_1$ to $P_6$) also cover the pattern $P_7:$ X020X, which means all of the MUPs in the example are covered.
Consider the pattern $P:$ 1X11X in level $\ell=3$. First, this is uncovered, as it is a child of the MUP $P_5:$ XX11X.
None of the combinations 02011, 02111, or 10201 match $P$. This means that there exists at least one pattern at level $\ell=3$ that remains uncovered.
This contradicts the claim that the maximum coverage level is $\ell=3$.

In fact, every MUP represents the set of uncovered patterns at lower levels that are connected to it. Covering the MUP may not satisfy covering those patterns as well.
As a result, even though the MUP itself is covered, some of its children may still not be.

On the other hand, if all the (not necessarily maximal) uncovered patterns at a level $\lambda$ are covered, all the more general patterns with level $\leq \lambda$ are also covered. That is because every (more general) pattern $P$ at a level less than $\lambda$
represents a set of value combinations that are the superset of the matches for at least one uncovered pattern $P'$ at level $\ell$ (for the ease of explanation, we say $P'$ is a ``subset'' pattern for $P$). Thus, collecting a value combination that matches $P'$ will also match $P$.

As a result, in order to guarantee the coverage at level $\lambda$, it is enough to apply the greedy hitting set as explained in \S~\ref{sec:4} on the set of uncovered patterns at level $\lambda$.
But, we first need to generate all uncovered patterns at level $\lambda$.
To do so, for every MUP $P\in\mathcal{M}$ where $\ell(P)\leq \lambda$, we find its descendants at level $\lambda$ by replacing $(\lambda - \ell)$ non-deterministic elements (X's) with deterministic values.
For example, in Example~\ref{ex:datacollection}, the subset patterns for $P_1:$ XX01X at level $\ell=3$ are: 0X01X, 1X01X, X001X, X101X, X201X, XX010, and XX011.

}

\end{document}